\newtheorem{theorem}{Theorem}
\newtheorem*{thm}{Theorem}
\newtheorem{lemma}[theorem]{Lemma}
\newtheorem*{lm}{Lemma}
\newtheorem{corollary}[theorem]{Corollary}
\newtheorem{definition}[theorem]{Definition}
\newcommand{\eps}{{\epsilon}}
\newcommand{\C}[1]{{{C}^{(#1)}}}
\newcommand{\itS}[1]{{{S}^{(#1)}}}
\newcommand{\itT}[1]{{{T}^{(#1)}}}
\newcommand{\commented}{no}
\newcommand{\rnote}[1]{\footnote{{\bf [[Ragesh: {#1}\bf ]] }}}
\newcommand{\snote}[1]{\footnote{{\bf [[Sandeep: {#1}\bf ]] }}}
\newcommand{\jnote}[1]{}
\newcommand{\snote}[1]{}
\newcommand{\rnote}[1]{}
\title{A simple $D^2$-sampling based PTAS for $k$-means and other Clustering problems}
\author[1]{Ragesh Jaiswal}
\author[1]{Amit Kumar}
\author[1]{Sandeep Sen} 
\affil[1]{Department of Computer Science and Engineering \\ IIT Delhi\\ \texttt{\{rjaiswal,amitk,ssen\}@cse.iitd.ac.in}}
\begin{document}

\maketitle

\begin{abstract}
Given a set of points $P \subset \mathbb{R}^d$, the $k$-means clustering problem is to find a set of $k$ {\em centers} $C = \{c_1,...,c_k\}, c_i \in \mathbb{R}^d,$ such that the  objective function $\sum_{x \in P} d(x,C)^2$, where $d(x,C)$ denotes the distance between $x$ and the closest center in $C$, is
minimized. This is one of the most prominent objective functions that have been studied with respect to clustering.

$D^2$-sampling \cite{ArthurV07} is a simple non-uniform sampling technique for choosing points from a set of points. It works as follows: given a set of points $P \subseteq \mathbb{R}^d$, the first point is chosen uniformly at random from $P$. Subsequently, a point from $P$ is chosen as the next sample with probability proportional to the square of the distance of this point to the nearest previously sampled points.

$D^2$-sampling has been shown to have nice properties with respect to the $k$-means clustering problem. Arthur and Vassilvitskii \cite{ArthurV07} show that $k$ points chosen as centers from $P$ using $D^2$-sampling gives an $O(\log{k})$ approximation in expectation. Ailon et. al. \cite{AJMonteleoni09} and Aggarwal et. al. \cite{AggarwalDK09} extended results of \cite{ArthurV07} to show that $O(k)$ points chosen as centers using $D^2$-sampling give $O(1)$ approximation to the $k$-means objective function with high probability. In this paper, we further demonstrate the power of $D^2$-sampling by giving a simple randomized $(1 + \epsilon)$-approximation algorithm that uses
the $D^2$-sampling in its core.
\end{abstract}

\section{Introduction}
Clustering problems arise in diverse areas including  machine learning, data mining, image processing and
web-search \cite{broder97,faloutsos,deer90,Swain}.
One of the most commonly used clustering problems is the $k$-means problem. Here,
we are given a set of points $P$ in a $d$-dimensional Euclidean space, and a parameter $k$. The goal is
to find a set $C$ of $k$ {\em centers} such that the objective function $$\Delta(P, C) = \sum_{p \in P} d(p,C)^2$$ is minimized,
where $d(p,C)$ denotes the distance from $p$ to the closest center in $C$. This naturally partitions $P$ into
$k$ clusters, where each cluster corresponds to the set of points of $P$ which are closer to a particular center
than other centers. It is also easy to show that the center of any cluster must be  the mean of the
points in it.
 In most applications, the parameter $k$ is a small constant. However, this problem turns out to be
 NP-hard even for $k=2$ \cite{das08}.

One very popular heuristic for solving the $k$-means problem is the Lloyd's algorithm~\cite{lloyd}. The heuristic is
as follows : start with an arbitrary set of $k$ centers as seeds. Based on these $k$ centers, partition the
set of points into $k$ clusters, where each point gets assigned to the closest center. Now, we update the
set of centers as the means of each of these clusters. This process is repeated till we get convergence. Although,
this heuristic often performs well in practice, it is known that it can get stuck in local minima~\cite{ArthurV06}.
There has been lot of recent research in understanding why this heuristic works fast in practice, and how it can be modified such that we can guarantee that the solution produced by this heuristic is always close to the optimal solution.

One such modification  is to carefully choose the set of initial $k$ centers.
Ideally, we would like to pick these centers such that we have a center close to each of the optimal clusters.
Since we do not know the optimal clustering, we would like to make sure that these centers are well separated from each
other and yet, are representatives of the set of points. A recently proposed idea~\cite{OstrovskyRSS06,ArthurV07} is to
pick the initial centers using $D^2$-sampling which can be described as follows.
The first center is picked uniformly at random from the set of
points $P$. Suppose we have picked a set of $k' < k$ centers -- call this set $C'$.
Then a point $p \in P$ is chosen as the next center
with probability proportional to $d(p, C')^2$. This process is repeated till we have a set of $k$ centers.

There has been lot of recent activity in understanding how good a set of centers picked by $D^2$-sampling are (even
if we do not run the Lloyd's algorithm on these seed centers).
Arthur and  Vassilvitskii~\cite{ArthurV07} showed that if we pick $k$ centers with $D^2$-sampling,
then the expected cost of the corresponding solution to the $k$-means instance is within $O(\log k)$-factor of
the optimal value. Ostrovsky et. al.~\cite{OstrovskyRSS06} showed that if the set of points satisfied a separation 
condition (named $(\eps^2,k)$-irreducible as defined in Section~\ref{sec:pre}), then these $k$ centers give a constant factor approximation for the $k$-means problem. Ailon et. al.~\cite{AJMonteleoni09}
proved a bi-criteria approximation property -- if we pick $O(k \log k)$ centers by $D^2$-sampling, then it is
a constant approximation, where we compare with the optimal solution that is allowed to pick $k$ centers only.
Aggarwal et. al.~\cite{AggarwalDK09} give an improved result and show that it is enough to pick $O(k)$ centers by
$D^2$-sampling to get a constant factor bi-criteria approximation algorithm.

In this paper, we give yet another illustration of the power of the $D^2$-sampling idea. We give a simple randomized
$(1+\eps)$-approximation algorithm for the $k$-means algorithm, where $\eps > 0$ is an arbitrarily small constant.
At the heart of our algorithm is the idea of $D^2$-sampling -- given a set of already selected centers, we pick
a small set of points by $D^2$-sampling with respect to these selected centers. Then, we pick the next
center as the centroid of a subset of these small set of points. By repeating this process of picking $k$ centers
sufficiently many times, we can guarantee that with high probability, we will get a set of $k$ centers whose objective
value is close to the optimal value. Further, the running time of our algorithm is $O(nd \cdot 2^{\tilde{O}(k^2/\eps)})$ \footnote{ $\tilde O$
notation hides a $O(\log k/\eps)$ factor which simplifies the expression.}--
for constant value of $k$, this is a linear time algorithm.
 It is important to note that PTAS with better running time are known
for this problem. Chen \cite{Chen06} give an $O\left(nkd + d^2 n^{\sigma} \cdot 2^{(k/\epsilon)^{O(1)}}\right)$ algorithm for any $\sigma > 0$ and Feldman et al. \cite{FeldmanMS07} give an $O\left(nkd + d \cdot poly(k/\epsilon) + 2^{\tilde{O}(k/\epsilon)}\right)$ algorithm.
However, these results often are quite involved, and use the notion of coresets. Our algorithm is simple, and only uses the concept of $D^2$-sampling.

\subsection{Other Related Work}
There has been significant research on exactly solving the $k$-means algorithm (see e.g.,~\cite{inaba}), but all of these algorithms
take $\Omega(n^{kd})$ time. Hence, recent research on this problem has focused on obtaining fast $(1+\eps)$-approximation
algorithms for any $\eps > 0$. Matousek~\cite{Matousek00} gave  a PTAS
with running time  $O(n \eps^{-2k^2d} \log^k n)$. Badoiu et al.~\cite{BadoiuHI02}
gave an improved PTAS with running time
$O(2^{(k/\eps)^{O(1)}} d^{O(1)}n\log^{O(k)}n)$. de la Vega et al.~\cite{VegaKKR03} gave a PTAS which works well for points
in high dimensions. The running time of this algorithm is $O(g(k,\eps)n \log^kn)$ where
$g(k,\eps) = \exp[(k^3/\eps^8)(\ln (k/\eps) \ln k]$.  Har-Peled et al.~\cite{Har-PeledM04}  proposed a PTAS whose running time
is $O(n+k^{k+2}\eps^{-(2d+1)k} \log^{k+1}n \log^k\frac{1}{\eps})$. Kumar et al.~\cite{KumarSS10} gave the first
linear time PTAS for fixed $k$ -- the running time of their algorithm is $O(2^{(k/\eps)^{O(1)}} dn)$. Chen~\cite{Chen06}
used the a new coreset construction to  give a PTAS with improved running time of $O(ndk + 2^{(k/\eps)^{O(1)}} d^2n^\sigma)$.
Recently, Feldman et al.~\cite{FeldmanMS07} gave a PTAS with running time $O(nkd + d  \cdot poly(k/\eps) +
2^{{\tilde O} (k/\eps)})$ -- this is the fastest known PTAS (for fixed $k$) for this problem.

There has also been work on obtaining fast constant factor approximation algorithms for the $k$-means problem
based on some properties of the input points (see e.g.~\cite{OstrovskyRSS06,AwasthiBS10}).

\subsection{Our Contributions}
In this paper, we give a simple PTAS for the $k$-means problem based on the idea of $D^2$-sampling. Our work
builds on and simplifies the result of Kumar et al.~\cite{KumarSS10}. We briefly describe their algorithm first.
It is well known that for the 1-mean problem, if we sample a set of $O(1/\eps)$ points uniformly at random, then
the mean of this set of sampled points is close to the overall mean of the set of all points. Their algorithm
begins by sampling $O(k/\eps)$ points uniformly at random. With reasonable probability, we would sample $O(1/\eps)$
points from the largest cluster, and hence we could get a good approximation to the center corresponding to this cluster
(their algorithm tries all subsets of size $O(1/\eps)$ from the randomly sampled points). However, the other clusters
may be much smaller, and we may not have sampled enough points from them. So, they need to prune a lot of points
from the largest cluster so that in the next iteration a random sample of $O(k/\eps)$ points will contain $O(1/\eps)$
points from the second largest cluster, and so on. This requires a  non-trivial idea termed as {\em tightness}
 condition by the authors. In this paper,
we show that the pruning is not necessary if  instead of using uniform random sampling, one uses $D^2$-sampling.

We can informally describe our algorithm as follows. We maintain a set of
candidate centers $C$, which is initially empty.
Given a set $C$, $|C| < k$, we add a new center to $C$ as follows. We sample a set $S$ of $O(k/\eps^3)$ points using
$D^2$-sampling with respect to $C$. From this set of sampled points, we pick a
subset $T$ and the new center is
the mean of this set $T$. We add this to $C$ and continue.

From the property of $D^2$-sampling (\cite{AggarwalDK09,AJMonteleoni09}),
with some constant, albeit small probability $p'$, we pick up a point from
a hitherto untouched cluster $C'$ of the {\em optimal clustering}. Therefore
by sampling about $\alpha /p'$ points using $D^2$-sampling, we expect to hit
approximately $\alpha $ points from $C'$. If $\alpha $ is large enough,
(c.f. Lemma~\ref{lem:inaba}), then the centroid of these $\alpha$ points gives a
$(1 +\eps)$ approximation of the cluster $C'$.
Therefore, with reasonable probability, there will be
a choice of a subset $T$ in each iteration such that the set of
centers chosen are from $C'$. Since we do not know $T$, our algorithm will
try out all subsets of size $|T|$ from the sample $S$.
Note that our algorithm is very
simple, and can be easily parallelized.
Our algorithm has running time $O(dn \cdot 2^{\tilde{O}(k^2/\eps)})$ which is an improvement over that of Kumar et al.~\cite{KumarSS10} who
gave a PTAS with running time $O\left(nd \cdot 2^{(k/\eps)^{O(1)}}\right)$.
\footnote{It can be used in conjunction with Chen~\cite{Chen06} to obtain
a superior running time but at the cost of the simplicity of our approach}

Because of the relative simplicity, our algorithm generalizes to measures
like Mahalanobis distance and $\mu$-similar Bregman divergence. Note that
these do not satisfy triangle inequality and therefore not strict metrics.
Ackermann et al. \cite{ab09} have generalized the framework of Kumar et al.
\cite{KumarSS10} to Bregman divergences but we feel that the $D^2$-sampling based
algorithms are simpler.

We formally define the problem and give some preliminary results in Section~\ref{sec:pre}.
In Section~\ref{sec:algo}, we describe our algorithm, and then analyze it subsequently. In Section~\ref{sec:other}, we discuss PTAS for other distance measures.

\section{Preliminaries}
\label{sec:pre}
An instance of the $k$-means problem consists of a set $P \subseteq \mathbb{R}^d$ of $n$ points in $d$-dimensional space and a parameter $k$.
For a set of points (called centers) $C \subseteq \mathbb{R}^d$, let $\Delta(P,C)$ denote $\sum_{p \in P} d(p,C)^2, $ i.e., the cost of the
solution which picks $C$ as the set of centers. For a singleton $C = \{c\}$, we shall often abuse notation,
and use $\Delta(P, c)$ to denote $\Delta(P, C)$.
 Let $\Delta_k(P)$ denote the cost of the optimal $k$-means
solution for $P$.

\begin{definition}
Given a set of points $P$ and a set of centers $C$, a point $p \in P$ is said to be sampled using {\em $D^2$-sampling}
with respect to $C$ if the probability of it being sampled, $\rho(p)$,  is given by
$$\rho(p) = \frac{d(p,C)^2}{\sum_{x \in P} d(x,C)^2} =
\frac{\Delta(\{p\}, C)}{\Delta(P,C)}. $$
\end{definition}

We will also need the following definition from~\cite{KumarSS10}.
\begin{definition}[Irreducibility or separation condition] Given $k$ and $\eps$, a set of points $P$ is said to be $(k,\gamma)$-irreducible if
$$\Delta_{k-1}(P) \geq (1+\gamma) \cdot \Delta_{k}(P). $$
\end{definition}

We will often appeal to the following result~\cite{inaba} which shows that uniform random sampling works well for $1$-means\footnote{It turns out
that even minor perturbations from uniform distribution can be catastrophic and indeed in this paper we had to work around this.}.
\begin{lemma}[Inaba et al.~\cite{inaba}]
\label{lem:inaba}
Let $S$ be a set of points obtained by independently sampling $M$ points with replacement uniformly at random from a point set $P$. Then, for any $\delta > 0$,
\[
\Delta(P, \{m(S)\}) \leq \left(1 +  \frac{1}{\delta M}\right) \cdot \Delta(P, \{m(P)\}),
\]
holds with probability at least $(1 - \delta)$. 
Here $m(X) =  \left(\frac{\sum_{x \in X} x}{|X|}\right)$ denotes the centroid of a point set $X$.
\end{lemma}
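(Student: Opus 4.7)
The plan is to reduce everything to a variance bound on the sample mean and then invoke Markov's inequality.

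First I would recall the standard ``parallel axis'' identity for the $1$-means cost: for any point $c \in \mathbb{R}^d$,
\[
\Delta(P,\{c\}) \;=\; \Delta(P,\{m(P)\}) + |P|\cdot \|c - m(P)\|^2.
\]
This follows by expanding $\|p-c\|^2 = \|p - m(P)\|^2 + 2\langle p - m(P), m(P) - c\rangle + \|m(P)-c\|^2$, summing over $p \in P$, and noting that the cross term vanishes because $\sum_{p\in P}(p - m(P)) = 0$. Applying this with $c = m(S)$ reduces the lemma to showing that, with probability at least $1-\delta$,
\[
|P|\cdot \|m(S) - m(P)\|^2 \;\le\; \tfrac{1}{\delta M}\,\Delta(P,\{m(P)\}).
\]

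Next I would compute the expectation of $\|m(S) - m(P)\|^2$. Write $S = (X_1,\dots,X_M)$ where the $X_i$ are i.i.d.\ uniform on $P$, so $m(S) = \frac{1}{M}\sum_{i=1}^M X_i$ and $\mathbb{E}[X_i] = m(P)$. Because the $X_i$ are independent, cross terms vanish and
\[
\mathbb{E}\bigl[\|m(S) - m(P)\|^2\bigr]
\;=\; \frac{1}{M^2}\sum_{i=1}^M \mathbb{E}\bigl[\|X_i - m(P)\|^2\bigr]
\;=\; \frac{1}{M}\cdot \frac{\Delta(P,\{m(P)\})}{|P|},
\]
since under the uniform distribution $\mathbb{E}[\|X_i - m(P)\|^2] = \tfrac{1}{|P|}\Delta(P,\{m(P)\})$.

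Finally, I would apply Markov's inequality to the nonnegative random variable $\|m(S) - m(P)\|^2$ with threshold $\tfrac{1}{\delta}$ times its mean: with probability at least $1-\delta$,
\[
\|m(S) - m(P)\|^2 \;\le\; \frac{1}{\delta M}\cdot \frac{\Delta(P,\{m(P)\})}{|P|}.
\]
Multiplying by $|P|$ and plugging into the identity from the first step yields the claimed bound. There is no real obstacle here; the only ``trick'' is the decomposition into a deterministic term plus the quantity $\|m(S) - m(P)\|^2$, after which independence of the samples immediately controls the second moment and Markov finishes the job.
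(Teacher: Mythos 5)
Your proof is correct. Note that the paper itself does not prove this lemma at all: it is cited directly from Inaba, Katoh, and Imai and used as a black box (together with Lemma~\ref{lemma:2}, which is exactly your ``parallel axis'' identity). So there is no proof in the paper to compare against. That said, the argument you give is the standard and essentially only argument for this bound: decompose $\Delta(P,\{m(S)\})$ via Lemma~\ref{lemma:2} into $\Delta(P,\{m(P)\}) + |P|\cdot\|m(S)-m(P)\|^2$, observe that $m(S)$ is an unbiased estimator of $m(P)$ with $\mathbb{E}\bigl[\|m(S)-m(P)\|^2\bigr] = \tfrac{1}{M}\cdot\tfrac{\Delta(P,\{m(P)\})}{|P|}$ because the cross terms in the sum of i.i.d.\ deviations vanish, and then apply Markov's inequality. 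Each step checks out, and the constants match the statement exactly.
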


Finally, we will use the following property of the squared Euclidean metric. This is a standard result from linear algebra \cite{ps05}.

\begin{lemma}\label{lemma:2}
Let $P \subseteq \mathbb{R}^d$ be any point set and let $c \in \mathbb{R}^d$ be any point. Then we have the following:
\[
\sum_{p \in P} d(p, c)^2 = \sum_{p \in P} d(p, m(P))^2 + |P| \cdot d(c, m(P))^2,
\]
where $m(P) =  \left(\frac{\sum_{p \in P} p}{|P|}\right)$ denotes the centroid of the point set.
\end{lemma}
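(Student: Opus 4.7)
The plan is to prove the identity by expanding the squared Euclidean norm $\|p - c\|^2$ around the centroid $m(P)$ and showing that the cross term vanishes by the defining property of the centroid. This is the standard ``parallel axis'' or bias--variance decomposition, so no clever trick is required; the only thing to check carefully is that the sum of the linear deviations from $m(P)$ is indeed the zero vector.

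Concretely, I would first write $d(p,c)^2 = \|p - c\|^2$ and insert $\pm m(P)$ to get
\[
\|p - c\|^2 = \|(p - m(P)) + (m(P) - c)\|^2 = \|p - m(P)\|^2 + 2 \langle p - m(P),\, m(P) - c \rangle + \|m(P) - c\|^2.
\]
Then I would sum this identity over all $p \in P$. The first term on the right immediately yields $\sum_{p \in P} d(p, m(P))^2$, and the third term yields $|P| \cdot d(c, m(P))^2$ since it does not depend on $p$.

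The only remaining piece is the cross term
\[
2 \left\langle \sum_{p \in P} (p - m(P)),\; m(P) - c \right\rangle,
\]
where I have pulled the constant vector $m(P) - c$ out of the sum by linearity of the inner product. Here is where I would invoke the definition $m(P) = \tfrac{1}{|P|} \sum_{p \in P} p$, which gives $\sum_{p \in P} (p - m(P)) = \sum_{p \in P} p - |P| \cdot m(P) = 0$. Hence the cross term vanishes and combining the three contributions yields exactly the claimed identity. There is no real obstacle here; the only point worth emphasizing is that the cancellation is specific to the \emph{centroid} $m(P)$ and would fail for any other reference point, which is the whole reason $m(P)$ is the unique minimizer of the $1$-means cost.
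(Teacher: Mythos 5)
Your proof is correct and complete: you expand $\|p-c\|^2$ around $m(P)$, observe that $\sum_{p\in P}(p - m(P)) = 0$ by the definition of the centroid so the cross term vanishes, and collect the remaining two terms. The paper itself does not prove this lemma --- it cites it as a standard linear-algebra fact --- so there is no in-paper argument to compare against, but what you wrote is exactly the standard parallel-axis derivation one would supply.
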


Finally, we mention the simple approximate triangle inequality with respect to the squared Euclidean distance measure.

\begin{lemma}[Approximate triangle inequality]\label{lemma:3}
For any three points $p, q, r \in \mathbb{R}^d$ we have:
\[
d(p, q)^2 \leq 2 \cdot (d(p, r)^2 + d(r, q)^2).
\]
\end{lemma}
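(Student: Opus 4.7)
The plan is to derive this from the standard (Euclidean) triangle inequality together with the AM-GM inequality (equivalently, the observation that $(a-b)^2 \geq 0$).

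First, I would start from the ordinary triangle inequality in $\mathbb{R}^d$, namely $d(p,q) \leq d(p,r) + d(r,q)$. Both sides are nonnegative, so squaring preserves the inequality, giving
\[
d(p,q)^2 \leq d(p,r)^2 + 2\, d(p,r)\, d(r,q) + d(r,q)^2.
\]
Next, I would bound the cross term using the elementary inequality $2ab \leq a^2 + b^2$, which follows from $(a-b)^2 \geq 0$. Applied to $a = d(p,r)$ and $b = d(r,q)$, this yields
\[
2\, d(p,r)\, d(r,q) \leq d(p,r)^2 + d(r,q)^2.
\]
Substituting this bound back into the previous display gives $d(p,q)^2 \leq 2\bigl(d(p,r)^2 + d(r,q)^2\bigr)$, which is exactly the claim.

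There is essentially no obstacle here; the only subtlety worth mentioning is that the factor of $2$ is tight (achieved, for example, when $r$ is the midpoint of segment $pq$), which explains why this ``approximate'' triangle inequality is the best one can hope for with the squared Euclidean distance, since $d(\cdot,\cdot)^2$ is not itself a metric. An alternative equally short derivation would expand $\|p-q\|^2 = \|(p-r) + (r-q)\|^2$ directly using the inner product and then apply Cauchy--Schwarz, but the route via the standard triangle inequality plus AM-GM is the cleanest.
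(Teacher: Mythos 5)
Your proof is correct. The paper states this lemma as a standard fact without proof, and your derivation via the ordinary triangle inequality followed by the bound $2ab \leq a^2 + b^2$ is exactly the expected textbook argument.
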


\section{PTAS for $k$-means}
\label{sec:algo}
We first give a high level description behind the  algorithm. We will also assume that the instance is $(k, \eps)$-irreducible for a
suitably small parameter $\eps$. We shall then get rid of this assumption later. The algorithm is described
in Figure~\ref{fig:k}. Essentially, the algorithm maintains a set $C$ of centers,
where $|C| \leq k$. Initially $C$ is empty, and in each iteration of Step 2(b), it adds one center to $C$ till
its size reaches $k$. Given a set $C$, it samples  a set of $S$ points from $P$ using $D^2$-sampling with respect
to $C$ (in Step 2(b)). Then it picks  a subset $T$ of $S$ of size $M = O(1/\eps)$,
and adds the centroid of $T$ to $C$. The algorithm
cycles through all possible subsets of size $M$ of $S$ as choices for $T$, and for each such choice,
repeats the above steps to find the next center, and so on.  To make the presentation clearer,
we pick a $k$-tuple of $M$-size subsets $(s_1, \ldots, s_k)$ in advance, and when $|C|=i$, we pick $T$ as the $s_i^{th}$ subset of $S$. 
In Step 2(i), we cycle through all such $k$-tuples $(s_1, \ldots, s_k)$. In the analysis, we just need
to show that {\em one} such $k$-tuple works with reasonable probability.

We develop some notation first. For the rest of the analysis, we will fix a tuple $(s_1, \ldots, s_k)$ --
this will be the ``desired tuple'', i.e., the one for which we can show that the set $C$ gives a good solution.
As our analysis proceeds, we will argue what properties this tuple should have. Let $\C{i}$ be the set $C$
at the beginning of the $i^{th}$ iteration of Step 2(b). To begin with $\C{0}$ is empty. Let $\itS{i}$
be the set $S$ sampled during the $i^{th}$ iteration of Step 2(b), and $\itT{i}$ be the corresponding set $T$
(which is the $s_i^{th}$ subset of $\itS{i}$).

Let $O_1, \ldots, O_k$ be the optimal clusters, and $c_i$ denote the centroid  of points in $O_i$. Further, let $m_i$ denote $|O_i|$,
and wlog assume that $m_1 \geq \ldots \geq m_k$. Note that $\Delta_1(O_i)$ is same as
$\Delta(O_i, \{c_i\})$.
 Let $r_i$ denote the average cost paid by a point in $O_i$, i.e.,
$$r_i = \frac{\sum_{p \in O_i} d(p,c_i)^2}{m_i}. $$
We will assume that the input set of points $P$ are
$(k, \eps)$-irreducible. We shall remove this assumption later. Now we show that any two optimal centers are far enough.

\begin{center}
\begin{figure}
\begin{boxedminipage}{5in}
{\bf Find-k-means(P)}

\hspace{0.1in} Let $N = (51200 \cdot k/\eps^3)$, $M = 100/\eps$, and $P = \binom{N}{M}$

\hspace{0.1in} 1. {\bf Repeat} $2^k$ times and output the the set of centers $C$ that give least cost

\hspace{0.3in} 2. {\bf Repeat} for all $k$-tuples $(s_1, ..., s_k) \in [P]\times [P] \times .... \times [P]$ and

\hspace{0.3in} \ \ \ \ \ pick the set of centers $C$ that gives least cost

\hspace{0.5in} \ \ \ (a) $C \leftarrow \{\}$

\hspace{0.5in} \ \ \ (b) For $i \leftarrow 1$ to $k$

\hspace{0.7in}\ \ \ \ \  \ \ Sample a set $S$ of $N$ points with $D^2$-sampling (w.r.t. centers $C$)

\hspace{0.7in} \ \ \ \ \ \ \ Let $T$ be the $s_i^{th}$ subset of $S$. \footnote{For a set of size $N$ we consider an arbitrary ordering of the subsets of size $M$ of this set.}

\hspace{0.7in} \ \ \ \ \ \ \ $C \leftarrow C \cup \{m(T)\}$. \footnote{$m(T)$ denote the centroid of the points in $T$.}

\end{boxedminipage}
\caption{The $k$-means algorithm that gives $(1+\eps)$-approximation for any $(k, \eps)$-irreducible data set. Note that the inner loop is executed at most 
$2^k \cdot \left( \binom{N}{M} \right)^k \sim 2^k \cdot 2^{{\tilde O} (k/\eps)}$ times. }
\label{fig:k}
\end{figure}
\end{center}
\vspace*{-0.5in}

\begin{lemma}
\label{lem:min}
For any $1 \leq i, j \leq k, i \neq j$,
$$ d(c_i, c_j)^2 \geq \eps \cdot (r_i + r_j). $$
\end{lemma}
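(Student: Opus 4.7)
The plan is to use the $(k,\eps)$-irreducibility assumption together with Lemma~\ref{lemma:2} via a standard merging argument. Specifically, I would construct an explicit $(k-1)$-means solution by merging the two clusters $O_i$ and $O_j$ into one cluster (and keeping the other $k-2$ clusters as in the optimum), and then compare its cost with $\Delta_{k}(P)$ using irreducibility.

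First I would compute the cost of the merged cluster. Let $\tilde c$ be the centroid of $O_i \cup O_j$; by the weighted centroid formula, $\tilde c = \frac{m_i c_i + m_j c_j}{m_i + m_j}$, so $d(c_i,\tilde c)^2 = \frac{m_j^2}{(m_i+m_j)^2} d(c_i,c_j)^2$ and symmetrically for $c_j$. Applying Lemma~\ref{lemma:2} to each of $O_i$ and $O_j$ with center $\tilde c$ and then summing gives
\[
\Delta(O_i \cup O_j, \tilde c) \;=\; m_i r_i + m_j r_j + \frac{m_i m_j}{m_i + m_j}\,d(c_i, c_j)^2.
\]
Keeping the other $k-2$ optimal centers, this yields a valid $(k-1)$-means solution of cost $\Delta_k(P) + \frac{m_i m_j}{m_i + m_j} d(c_i, c_j)^2$, hence
\[
\Delta_{k-1}(P) \;\leq\; \Delta_k(P) + \tfrac{m_i m_j}{m_i + m_j}\, d(c_i, c_j)^2.
\]

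Next I would invoke $(k,\eps)$-irreducibility: $\Delta_{k-1}(P) \geq (1+\eps)\Delta_k(P)$, so
\[
\eps \, \Delta_k(P) \;\leq\; \tfrac{m_i m_j}{m_i + m_j}\, d(c_i, c_j)^2.
\]
Since $\Delta_k(P) \geq m_i r_i + m_j r_j$ (this is just the contribution of clusters $O_i$ and $O_j$ to the optimum), rearranging gives
\[
d(c_i, c_j)^2 \;\geq\; \eps \cdot \frac{(m_i r_i + m_j r_j)(m_i + m_j)}{m_i m_j} \;=\; \eps\left(r_i + r_j + \tfrac{m_i}{m_j} r_i + \tfrac{m_j}{m_i} r_j \right) \;\geq\; \eps (r_i + r_j),
\]
which is the claim.

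I do not expect a serious obstacle here: the only nonroutine step is recognizing that merging the two clusters (rather than, say, simply dropping one center and reassigning its points to the other) produces the tighter incremental cost $\frac{m_i m_j}{m_i+m_j} d(c_i,c_j)^2$, which is what makes the final algebra come out symmetric in $r_i$ and $r_j$ without losing a factor. The rest is a direct application of Lemma~\ref{lemma:2} and the irreducibility hypothesis.
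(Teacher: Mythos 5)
Your proof is correct and rests on the same core idea as the paper's (use irreducibility to bound the cost increase from merging two optimal clusters, computed via Lemma~\ref{lemma:2}), but the execution differs in a way worth noting. The paper drops the center of the smaller cluster and assigns its points to the other center, giving an incremental cost of $m_{\min}\,d(c_i,c_j)^2$; this is then fed into a contradiction argument, and it requires a WLOG step to identify which of the two clusters is smaller (indeed, the paper's ``suppose $i>j$, hence $m_i\geq m_j$'' is inconsistent with the stated ordering $m_1\geq\cdots\geq m_k$ and should read $i<j$). Your version instead merges to the true centroid $\tilde c$ of $O_i\cup O_j$, yielding the symmetric incremental cost $\tfrac{m_i m_j}{m_i+m_j}\,d(c_i,c_j)^2 \leq m_{\min}\,d(c_i,c_j)^2$, which lets you finish by a direct rearrangement with no case split and no contradiction setup. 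This buys a cleaner, slightly tighter argument; the tradeoff is an extra small computation for $d(c_i,\tilde c)^2$. Both are sound, and yours is arguably the more natural route for a merging lemma of this type.
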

\begin{proof}
Suppose $i > j$, and hence $m_i \geq m_j$.
For the sake of contradiction assume $d(c_i, c_j)^2 <  \eps \cdot (r_i + r_j)$. Then we have,
\begin{eqnarray*}
\Delta(O_i \cup O_j, \{c_i\}) & = & m_i \cdot r_i + m_j \cdot r_j + m_j \cdot d(c_i, c_j)^2  \quad \textrm{(using Lemma~\ref{lemma:2})}\\
&\leq&  m_i \cdot r_i + m_j \cdot r_j + m_j \cdot \eps \cdot (r_i + r_j)\\
&\leq& (1 + \eps) \cdot m_i \cdot r_i +(1 + \eps) \cdot  m_j \cdot r_j \quad \textrm{(since $m_i \geq m_j$)}\\
&\leq& (1 + \eps) \cdot \Delta(O_i \cup O_j, \{c_i, c_j\})
\end{eqnarray*}
This implies that the centers $\{c_1, ..., c_k\} \backslash \{c_j\}$ give a $(1 + \eps)$-approximation to the
$k$-means objective. This contradicts the assumption that $P$ is $(\eps, k)$-irreducible.
\end{proof}

We give an outline of the proof. Suppose in the first $i-1$ iterations, we have found centers which are close to the centers of some 
$i-1$ clusters in the optimal solution. Conditioned on this fact, we show that in the next iteration, we are likely to sample enough 
number of points from one of the remaining clusters (c.f. Corollary~\ref{cor:sample}). Further, we show that the samples from this new
cluster are close to uniform distribution (c.f. Lemma~\ref{lem:key}). Since such a sample does not come from exactly uniform distribution, we
cannot apply Lemma~\ref{lem:inaba} directly. In fact, dealing with the slight non-uniformity turns out to be non-trivial (c.f. Lemmas~\ref{lem:clinaba} and ~\ref{lem:final}). 

We now show that the following invariant will hold for all iterations  : let $\C{i-1}$ consist of centers
$c_1', \ldots, c_{i-1}'$ (added in this order). Then, with probability at least $\frac{1}{2^{i}}$, there exist distinct
indices $j_1, \ldots, j_{i-1}$ such that
for all $l = 1, \ldots, i-1$,
\begin{equation}
\label{eq:inv}
 \Delta(O_{j_l},c_l') \leq (1 + \eps/20) \cdot \Delta(O_{j_l},c_{j_l})
\end{equation}
Suppose this invariant holds for $\C{i-1}$ (the base case is easy since $\C{0}$ is empty).
We now show that this invariant holds for $\C{i}$ as well. In other words, we just show that in the $i^{th}$ iteration,
with probability at least $1/2$,  the
algorithm finds a center $c_{i}'$  such that
$$ \Delta(O_{j_i},c_i') \leq (1 + \eps/20) \cdot \Delta(O_{j_i},c_{j_i}), $$ where $j_i$ is an index distinct from
$\{j_1, \ldots, j_{i-1}\}$.
This will basically show that at the end of the last iteration, we will have $k$ centers that give a $(1 + \eps)$-approximation
with probability at least $2^{- k}$.

We now show that the invariant holds for $\C{i}$. We use the notation developed above for $\C{i-1}$. Let $I$ denote
the set of indices $\{j_1, \ldots, j_{i-1}\}$. Now let $j_i$ be the index $j \notin I$ for which $\Delta(O_j, \C{i-1})$
is maximum. Intuitively, conditioned on sampling from clusters in $O_{i}, \cdots, O_{k}$ using $D^2$-sampling, it is  likely that
enough points from $O_{j_i}$ will be sampled.
The next lemma shows that there is good chance that elements from the sets $O_j$ for $j \notin I$
will be sampled.

\begin{lemma}
\label{lem:sampled}
 $$\frac{\sum_{l \notin I} \Delta(O_l, \C{i-1})}{\sum_{l=1}^k \Delta(O_l, \C{i-1})} \geq \eps/2.$$
\end{lemma}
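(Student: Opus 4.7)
The plan is to show directly that the ``good'' clusters indexed by $I$ contribute at most a $(1-\eps/2)$ fraction of the total cost $\Delta(P, \C{i-1}) = \sum_{l=1}^k \Delta(O_l, \C{i-1})$; the desired inequality then follows by taking complements. Two ingredients are required: an upper bound on $\sum_{l \in I}\Delta(O_l, \C{i-1})$ coming from the inductive invariant (\ref{eq:inv}), and a lower bound on $\Delta(P, \C{i-1})$ coming from the $(k,\eps)$-irreducibility hypothesis on $P$.

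For the upper bound, each $l \in I$ equals $j_t$ for some $1 \leq t \leq i-1$, and the invariant supplies a candidate center $c_t' \in \C{i-1}$ with $\Delta(O_l, c_t') \leq (1+\eps/20)\,\Delta(O_l, c_l)$. Since the cost to the full center set $\C{i-1}$ is at most the cost to the single center $c_t'$,
\[
\sum_{l \in I} \Delta(O_l, \C{i-1}) \;\leq\; (1+\eps/20)\sum_{l \in I} \Delta(O_l, c_l) \;\leq\; (1+\eps/20)\,\Delta_k(P),
\]
where in the last step we discarded the (nonnegative) contribution of the remaining clusters to $\Delta_k(P)$.

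For the lower bound, $\C{i-1}$ has $i-1 \leq k-1$ centers, so $\Delta(P, \C{i-1}) \geq \Delta_{i-1}(P) \geq \Delta_{k-1}(P)$, and by the irreducibility hypothesis $\Delta_{k-1}(P) \geq (1+\eps)\,\Delta_k(P)$. Combining the two bounds,
\[
\sum_{l \in I}\Delta(O_l, \C{i-1}) \;\leq\; \frac{1+\eps/20}{1+\eps}\,\Delta(P, \C{i-1}) \;\leq\; (1-\eps/2)\,\Delta(P, \C{i-1}),
\]
where the last inequality is an elementary computation valid for all reasonably small $\eps$ (concretely $\eps \leq 9/10$ suffices). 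Subtracting from $\Delta(P, \C{i-1})$ yields the claim; the base case $i = 1$ (for which $I = \emptyset$) is trivial, as the ratio equals $1$.

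The only thing to watch is that the slack $\eps/20$ hard-coded into the invariant (\ref{eq:inv}) is genuinely smaller than the slack $\eps$ in irreducibility, as this is exactly what makes the final algebraic comparison go through. This constraint is what forces the constants chosen in the invariant, and it will need to be respected whenever the invariant is later re-established for $\C{i}$.
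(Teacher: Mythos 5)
Your proof is correct and uses essentially the same ingredients as the paper's argument---the invariant bound $\sum_{l \in I} \Delta(O_l, \C{i-1}) \leq (1+\eps/20)\Delta_k(P)$ and the irreducibility lower bound $\Delta(P,\C{i-1}) \geq \Delta_{k-1}(P) \geq (1+\eps)\Delta_k(P)$, followed by the same elementary comparison $(1+\eps/20)/(1+\eps) \leq 1-\eps/2$---merely rearranged into a direct estimate instead of the paper's proof by contradiction. If anything your write-up is slightly more explicit, since the paper leaves the step $\Delta(P,\C{i-1}) \geq \Delta_{k-1}(P)$ (which uses $|\C{i-1}| \leq k-1$) implicit in its final appeal to irreducibility.
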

\begin{proof}
Suppose, for the sake of contradiction, the above statement does not hold. Then,
\begin{eqnarray*}
\Delta(P, \C{i-1}) &=& \sum_{l \in I}  \Delta(O_l, \C{i-1}) +  \sum_{l \notin I} \Delta(O_l, \C{i-1})  \\
& < &  \sum_{l \in I} \Delta(O_l, \C{i-1})  +  \frac{\eps/2}{1 - \eps/2} \cdot \sum_{l \in I}  \Delta(O_l, \C{i-1})
\quad \textrm{(by our assumption)}\\
&=& \frac{1}{1 - \eps/2} \cdot \sum_{l \in I} \Delta(O_l, \C{i-1}) \\
&\leq& \frac{1 + \eps/20}{1 - \eps/2} \cdot \sum_{l \in I}  \Delta_1(O_l) \quad \textrm{(using the invariant for $\C{i-1}$)}\\
&\leq& (1+\eps) \cdot  \sum_{l \in I}  \Delta_1(O_l)
\ \leq \  (1+\eps) \cdot  \sum_{l \in [k]}  \Delta_1(O_l)
\end{eqnarray*}

But this contradicts the fact that $P$ is $(k, \eps)$-irreducible.
\end{proof}

\noindent
We get the following corollary easily.
\begin{corollary}
\label{cor:sample}
$$ \frac{\Delta(O_{j_i}, \C{i-1})}{\sum_{l=1}^k  \Delta(O_l, \C{i-1})} \geq \frac{\eps}{2k}.$$
\end{corollary}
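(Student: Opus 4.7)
The plan is to derive this corollary by a simple averaging argument on top of Lemma~\ref{lem:sampled}. By definition, $j_i$ is the index outside of $I$ that maximizes $\Delta(O_j, \C{i-1})$. Since $|I| = i-1 \leq k-1$, the set $[k] \setminus I$ has at most $k$ elements, so the maximum is at least the average:
\[
\Delta(O_{j_i}, \C{i-1}) \;\geq\; \frac{1}{|[k]\setminus I|} \sum_{l \notin I} \Delta(O_l, \C{i-1}) \;\geq\; \frac{1}{k} \sum_{l \notin I} \Delta(O_l, \C{i-1}).
\]

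Then I would divide both sides by $\sum_{l=1}^k \Delta(O_l, \C{i-1})$ and invoke Lemma~\ref{lem:sampled}, which lower-bounds the ratio $\sum_{l \notin I} \Delta(O_l, \C{i-1}) / \sum_{l=1}^k \Delta(O_l, \C{i-1})$ by $\eps/2$. Combining gives the desired bound of $\eps/(2k)$.

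There is no real obstacle here — the work was already done in Lemma~\ref{lem:sampled}; the corollary just says that since $O_{j_i}$ is the largest of at most $k$ contributions among $\{O_l : l \notin I\}$, it carries at least a $1/k$ fraction of their total mass. The only thing to be careful about is the edge case where $\sum_{l=1}^k \Delta(O_l, \C{i-1}) = 0$, but in that situation $\C{i-1}$ already perfectly clusters $P$, which contradicts $(k,\eps)$-irreducibility together with $|\C{i-1}| < k$; so we may safely assume the denominator is positive.
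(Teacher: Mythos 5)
Your proof is correct and matches the argument the paper has in mind (the paper omits the proof, calling the corollary "easy"): $j_i$ maximizes $\Delta(O_j,\C{i-1})$ over the at most $k$ indices outside $I$, so it captures at least a $1/k$ fraction of $\sum_{l\notin I}\Delta(O_l,\C{i-1})$, and then Lemma~\ref{lem:sampled} finishes it. Your observation about the denominator being nonzero is a correct (if typically unstated) side remark.
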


The above Lemma and its Corollary say that with probability at least $\frac{\eps}{2k}$,
points in the set $O_{j_i}$ will be sampled. However the points within $O_{j_i}$ are not sampled uniformly.
Some points in $O_{j_i}$ might be sampled with higher probability than other points.
In the next lemma, we show that each point will be sampled with certain minimum probability.

\begin{lemma}
\label{lem:key}
For any $l \notin  I$ and any point $p \in O_l$, $$\frac{d(p, \C{i-1})^2}{\Delta(O_l, \C{i-1})} \geq \frac{1}{m_l} \cdot \frac{\eps}{64}. $$
\end{lemma}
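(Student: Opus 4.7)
The plan is to combine three ingredients: the invariant (which via Lemma~\ref{lemma:2} implies $d(c_t', c_{j_t})^2 \leq (\eps/20) r_{j_t}$); the irreducibility separation (Lemma~\ref{lem:min}, giving $d(c_l, c_{j_t})^2 \geq \eps(r_l + r_{j_t})$ for $j_t \neq l$); and the Voronoi property, that $p \in O_l$ implies $d(p, c_l) \leq d(p, c_{j_t})$ for every $j_t \neq l$. The aim is a lower bound on $d(p, c')^2$ uniform in $c' \in \C{i-1}$, to be compared with the upper bound $\Delta(O_l, \C{i-1}) \leq \Delta(O_l, \{c^*\}) = m_l(r_l + d(c_l, c^*)^2)$ obtained from Lemma~\ref{lemma:2}, where $c^* \in \C{i-1}$ denotes the center closest to $c_l$.

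First, I would derive two auxiliary bounds on each $c' = c_t' \in \C{i-1}$ associated with $c_{j_t}$. Combining Lemma~\ref{lem:min} with the invariant bound and Lemma~\ref{lemma:3} in one direction yields $d(c_l, c')^2 \geq \eps r_l/2$; applying Lemma~\ref{lemma:3} to the reverse triangle inequality $d(c_l, c_{j_t}) \leq d(c_l, c') + \sqrt{d(c_{j_t}, c')^2}$ yields the complementary estimate $\eps r_{j_t}/20 \leq d(c_l, c')^2/9$.

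Second, for a fixed $p \in O_l$ and an arbitrary $c' \in \C{i-1}$, I would case-split on whether $d(p, c_l)^2 \leq d(c_l, c')^2/4$. In the ``near'' case, Lemma~\ref{lemma:3} applied to the triple $(c_l, p, c')$ directly gives
\[
d(p, c')^2 \geq d(c_l, c')^2/2 - d(p, c_l)^2 \geq d(c_l, c')^2/4.
\]
In the ``far'' case, the Voronoi property gives $d(p, c_{j_t}) \geq d(p, c_l)$, and Lemma~\ref{lemma:3} on $(c_{j_t}, p, c')$ combined with the reverse auxiliary bound gives
\[
d(p, c')^2 \geq d(p, c_{j_t})^2/2 - \eps r_{j_t}/20 > d(c_l, c')^2/8 - d(c_l, c')^2/9 = d(c_l, c')^2/72.
\]
Taking the weaker constant and the minimum over $c' \in \C{i-1}$, I get $d(p, \C{i-1})^2 \geq d(c_l, c^*)^2/72$; combining with the first auxiliary bound also yields $d(p, \C{i-1})^2 \geq \eps r_l/144$.

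Finally, combining the two lower bounds (using $\max(a,b) \geq (a+b)/2$) with the upper bound on $\Delta(O_l, \C{i-1})$ and verifying that the resulting numerical constants exceed $1/64$ completes the argument. The main obstacle is the ``far'' case of the case-split: Lemma~\ref{lemma:3}'s factor of $1/2$ together with the auxiliary reverse bound produces only the small constant $1/72$, and ensuring this dominates the $\eps/64$ threshold in \emph{both} the $r_l$-dominated and the $d(c_l, c^*)^2$-dominated regimes of the upper bound on $\Delta(O_l, \C{i-1})$ demands careful bookkeeping of the constants.
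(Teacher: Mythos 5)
Your approach is genuinely different from the paper's, but it has a concrete arithmetic gap: the constants you derive do not in fact exceed $\eps/64$, contrary to the final claim.

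The paper avoids a case-split entirely: it fixes $c_t'$ to be the center of $\C{i-1}$ closest to $p$, so that $d(p,\C{i-1})^2 = d(p,c_t')^2$, and then uses the Voronoi property $d(p,c_l)\le d(p,c_{j_t})$ together with Lemma~\ref{lemma:3} to get $d(p,c_{j_t})^2 \ge d(c_{j_t},c_l)^2/4$, hence $d(p,c_t')^2 \ge d(c_{j_t},c_l)^2/8 - d(c_t',c_{j_t})^2 \ge d(c_{j_t},c_l)^2/16$. The key point is that the upper bound on $\Delta(O_l,\C{i-1})$ is also taken with the \emph{same} center $c_t'$, so both bounds are in terms of $d(c_{j_t},c_l)^2$ and the ratio works out with factor $1/16$ rather than $1/72$, which is what lets the final constant clear $\eps/64$. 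Your version instead lower-bounds $d(p,c')^2$ for \emph{every} $c'\in\C{i-1}$ via a near/far case-split, and separately takes the upper bound with $c^*$ closest to $c_l$; that is a clean decomposition, but the far case of your split is where the constant is lost.

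The concrete problem: your far case gives $d(p,c')^2 \ge d(c_l,c')^2/72$, so after minimizing over $c'$ you have $d(p,\C{i-1})^2 \ge d(c_l,c^*)^2/72$. Combined with $d(c_l,c^*)^2 \ge \eps r_l/2$ and $\Delta(O_l,\C{i-1}) \le m_l(r_l + d(c_l,c^*)^2) \le m_l d(c_l,c^*)^2 (2/\eps + 1)$, the best you get is
\[
\frac{d(p,\C{i-1})^2}{\Delta(O_l,\C{i-1})} \;\ge\; \frac{1/72}{m_l(2/\eps + 1)} \;\ge\; \frac{\eps}{216\, m_l}
\]
(for $\eps \le 1$), which is strictly weaker than the claimed $\eps/(64 m_l)$. (The $\max(a,b)\ge(a+b)/2$ trick does not help; in the regime where $r_l$ dominates the denominator, you still land at about $\eps/216$.) So as written the argument does not establish the stated constant, and since $N = 51200 k/\eps^3$ in Figure~\ref{fig:k} is calibrated to $\eps/64$, the downstream bound in Lemma~\ref{lem:final} would also need adjusting. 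The route is salvageable --- if you move the near/far threshold from $d(c_l,c')^2/4$ to roughly $(11/27)d(c_l,c')^2$ the two cases balance at about $d(c_l,c')^2 \cdot 5/54$, which does clear $\eps/64$ --- but you would need to carry that optimization through rather than assert that the given constants suffice. The paper's choice of the \emph{same} center for both the numerator and denominator is simply the cleaner way to avoid this constant-chasing.
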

\begin{proof}
Fix a point $p \in O_l$.
Let $j_t \in I$ be the
 index such that $p$ is closest to $c_{t}'$ among all centers in $\C{i-1}$.
We have
\begin{eqnarray}
\Delta(O_{l}, \C{i-1}) &\leq& m_{l} \cdot r_{l} + m_l \cdot d(c_l, c_t')^2 \quad \textrm{(using Lemma~\ref{lemma:2})}\nonumber \\
&\leq& m_{l} \cdot r_{l} + 2 \cdot m_{l} \cdot \left(d(c_l, c_{j_t})^2 + d(c_{j_t}, c_t')^2 \right) \quad \textrm{(using Lemma~\ref{lemma:3})} \nonumber \\
\label{eq:p1}
&\leq& m_l \cdot r_l + 2 \cdot m_l \cdot \left( d(c_l, c_{j_t})^2 + \frac{ \eps r_t}{20} \right),
\end{eqnarray}
where the second inequality follows from the invariant condition for $\C{i-1}$.
Also, we know that
\begin{eqnarray}
\nonumber
d(p,c_t')^2 & \geq & \frac{d(c_{j_t}, c_l)^2}{8} - d(c_{j_t}, c_t')^2 \quad
\textrm{(using Lemma~\ref{lemma:3})} \\
\nonumber
& \geq & \frac{d(c_{j_t}, c_l)^2}{8} - \frac{\eps}{20} \cdot r_t \quad \textrm{(using the invariant for $\C{i-1}$)} \\
\label{eq:p2}
&\geq& \frac{d(c_{j_t}, c_l)^2}{16} \quad \textrm{(Using Lemma~\ref{lem:min})}
\end{eqnarray}
So, we get
\begin{eqnarray*}
\frac{d(p, \C{i-1})^2}{\Delta(O_l, \C{i-1})} & \geq& \frac{d(c_{j_t}, c_l)^2}
{16 \cdot m_l \cdot \left( r_l + 2 \left( d(c_{j_t},c_l)^2 + \frac{\eps r_t}{20} \right) \right)} \quad \textrm{(using (\ref{eq:p1}) and (\ref{eq:p2}))} \\
&\geq& \frac{1}{16 \cdot m_l} \cdot \frac{1}{(1/\eps) + 2 + 1/20} \
\geq \ \frac{\eps}{64 \cdot m_l} \quad \textrm{(using Lemma~\ref{lem:min})}
\end{eqnarray*}
\end{proof}

Recall that $\itS{i}$ is the sample of size $N$ in this iteration. We would like to show that  that the invariant
will hold in this iteration as well. We first prove a simple corollary of Lemma~\ref{lem:inaba}.

\begin{lemma}
\label{lem:clinaba}
Let $Q$ be a set of $n$ points, and $\gamma$ be a parameter, $0 < \gamma < 1$. Define a random variable $X$ as follows :
with probability $\gamma$, it picks an element of $Q$ uniformly at random, and with  probability $1-\gamma$, it does not
pick any element (i.e., is null). Let $X_1, \ldots, X_\ell$ be $\ell$ independent copies of $X$, where $\ell = \frac{400}{\gamma \eps}.$
Let $T$ denote the (multi-set) of elements of $Q$ picked by $X_1, \ldots, X_\ell$. Then, with probability at least $3/4$,
$T$ contains a subset $U$ of size $\frac{100}{\eps}$ which satsifies
\begin{eqnarray}
\label{eq:clinaba}
\Delta(P, m(U)) \leq \left(1 + \frac{\eps}{20} \right) \Delta_1(P)
\end{eqnarray}
\end{lemma}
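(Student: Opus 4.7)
The plan is to decouple the randomness of ``which trials succeed'' from the randomness of ``what each successful trial picks,'' and then feed the successful picks into Lemma~\ref{lem:inaba}. Concretely, I would first show that $|T|$ is large with high probability, and then I would take $U$ to be (say) the first $M := 100/\epsilon$ non-null samples. Conditioned on $|T|\ge M$, the values of these first $M$ non-null entries are i.i.d.\ uniform over $Q$, so Lemma~\ref{lem:inaba} applies to them directly.

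For the first step, $|T|$ is a sum of $\ell = 400/(\gamma\epsilon)$ independent $\mathrm{Ber}(\gamma)$ variables, so $\mathbb{E}[|T|] = 400/\epsilon = 4M$. A standard Chernoff bound gives
\[
\Pr\!\left[|T| < M\right] \;=\; \Pr\!\left[|T| < \tfrac{1}{4}\mathbb{E}[|T|]\right] \;\le\; e^{-\Omega(1/\epsilon)} \;\le\; \tfrac{1}{20},
\]
for small enough $\epsilon$ (the constant $400$ is chosen to make this comfortable). For the second step, conditioned on the precise pattern of which indices $i_1 < i_2 < \cdots$ are non-null, the values $X_{i_1}, X_{i_2}, \ldots$ are i.i.d.\ uniform on $Q$ by construction of $X$; in particular, taking $U = \{X_{i_1}, \ldots, X_{i_M}\}$ (any $M$ non-null samples chosen by a rule that ignores their values) produces an i.i.d.\ uniform sample of size $M$ from $Q$. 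Applying Lemma~\ref{lem:inaba} to $Q$ with this sample and with $\delta = 1/5$ yields
\[
\Delta(Q, m(U)) \;\le\; \left(1 + \tfrac{1}{\delta M}\right)\Delta_1(Q) \;=\; \left(1 + \tfrac{\epsilon}{20}\right)\Delta_1(Q)
\]
with probability at least $1-\delta = 4/5$.

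Finally, a union bound over the two bad events ($|T|<M$ and Inaba failing) gives total failure probability at most $\tfrac{1}{20} + \tfrac{1}{5} = \tfrac{1}{4}$, so the required subset $U\subseteq T$ exists with probability at least $3/4$.

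The only delicate step is the conditioning argument in the second part: one must verify that picking $U$ by a rule that depends on the \emph{positions} of the successes (but not on their \emph{values}) preserves the i.i.d.\ uniform distribution, so that Inaba's lemma can be invoked as a black box. Once this is observed, the rest reduces to a routine Chernoff estimate and parameter matching ($M\delta = 20/\epsilon$ to get the factor $1+\epsilon/20$, and $\gamma\ell = 4M$ to absorb the success rate of $D^2$-style sampling).
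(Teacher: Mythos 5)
Your proposal is correct and follows essentially the same route as the paper's proof: both condition on the pattern of non-null indices to observe that the non-null samples are i.i.d.\ uniform on $Q$, invoke Lemma~\ref{lem:inaba} with $\delta = 1/5$ (so $1/(\delta M) = \eps/20$), bound $\Pr[|T| < M]$ via the fact that $\mathbb{E}|T| = 4M$, and union-bound. You are merely more explicit than the paper about fixing a value-blind selection rule for $U$, about the Chernoff estimate, and about correcting the obvious typo $P \to Q$ in the lemma statement.
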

\begin{proof}
 Define a random variable $I$, which is a subset of the index set $\{1, \ldots, \ell\}$, as follows
$I = \{ t : X_t \mbox{ picks an element of $Q$, i.e., it is not null} \}$. Conditioned on $I = \{t_1, \ldots, t_r\}$, note that the random variables $X_{t_1}, \ldots,
X_{t_r}$ are independent uniform samples from $Q$. Thus if $|I| \geq \frac{100}{\eps}$, then Lemma~\ref{lem:inaba} implies that with
probability at least 0.8, the desired event~(\ref{eq:clinaba}) happens. But the expected value of $|I|$ is $\frac{400}{\eps}$, and so,
$|I| \geq \frac{100}{\eps}$ with high probability, and hence, the statement in the lemma is true.
\end{proof}

\noindent
We are now ready to prove the main lemma.
\begin{lemma}
\label{lem:final}
With probability at least $1/2$,
there exists a subset $\itT{i}$ of $\itS{i}$ of size at most $\frac{100}{\eps}$ such that
$$ \Delta(O_{j_i}, m(\itT{i})) \leq (1 + \frac{\eps}{20}) \cdot \Delta_1(O_{j_i}). $$
\end{lemma}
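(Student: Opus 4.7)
\textbf{Proof plan for Lemma~\ref{lem:final}.} The strategy is to reduce the $D^2$-sample $S^{(i)}$ to the ``partial uniform sampler'' of Lemma~\ref{lem:clinaba} by means of a simple thinning coupling, so that we may quote that lemma with $Q = O_{j_i}$ and $\ell = N$.

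First, I would compute the probability that a single $D^2$-draw from $P$ (with respect to $C^{(i-1)}$) equals a fixed point $p \in O_{j_i}$. Writing this probability as the product $\frac{d(p,C^{(i-1)})^2}{\Delta(O_{j_i},C^{(i-1)})} \cdot \frac{\Delta(O_{j_i},C^{(i-1)})}{\Delta(P,C^{(i-1)})}$, the first factor is at least $\frac{\eps}{64 m_{j_i}}$ by Lemma~\ref{lem:key}, and the second is at least $\frac{\eps}{2k}$ by Corollary~\ref{cor:sample}. Therefore, every point of $O_{j_i}$ is sampled in one trial with probability at least $\frac{\eps^2}{128\,k\,m_{j_i}}$.

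Next, set $\gamma = \frac{\eps^2}{128 k}$. The key observation is that $N = 51200\,k/\eps^3 = 400/(\gamma \eps)$, exactly the sample size required by Lemma~\ref{lem:clinaba} for this $\gamma$. I would then define, for each of the $N$ samples $X_t$ making up $S^{(i)}$, a thinned variable $Y_t$: if $X_t = p \in O_{j_i}$, then independently output $Y_t = p$ with probability $\frac{\gamma/m_{j_i}}{\Pr[X_t = p]}$ (which is $\leq 1$ by the previous paragraph) and output $Y_t = \bot$ otherwise; if $X_t \notin O_{j_i}$, output $Y_t = \bot$. A direct calculation shows that each $Y_t$ is, marginally, null with probability $1-\gamma$ and otherwise uniform on $O_{j_i}$, and the $Y_t$'s are independent. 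Hence $(Y_1,\ldots,Y_N)$ is precisely the sampler hypothesized by Lemma~\ref{lem:clinaba}.

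Applying that lemma, with probability at least $3/4 \geq 1/2$, the multiset of non-null $Y_t$'s contains a subset $U$ of size $M = 100/\eps$ whose centroid satisfies $\Delta(O_{j_i}, m(U)) \leq (1+\eps/20)\,\Delta_1(O_{j_i})$. Taking $T^{(i)}$ to be the corresponding $X_t$'s (which by construction coincide with the points in $U$) yields the required subset of $S^{(i)}$. The main delicate point is the coupling step: one must carefully verify that the thinning probability $(\gamma/m_{j_i})/\Pr[X_t = p]$ is a legitimate probability (which is exactly what the combined Corollary~\ref{cor:sample} and Lemma~\ref{lem:key} bound guarantees) and that the resulting $Y_t$'s are i.i.d.\ with the right marginal, so that Lemma~\ref{lem:clinaba} applies verbatim.
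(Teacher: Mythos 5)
Your proposal is correct and matches the paper's own proof essentially step for step: both lower-bound the per-point $D^2$-sampling probability on $O_{j_i}$ via Corollary~\ref{cor:sample} and Lemma~\ref{lem:key}, both set $\gamma = \eps^2/(128k)$ and observe $N = 400/(\gamma\eps)$, and both use a thinning coupling to reduce to the partial uniform sampler of Lemma~\ref{lem:clinaba}. The only superficial difference is that you swap the roles of the letters $X_t$ and $Y_t$ relative to the paper.
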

\begin{proof}
Recall that $\itS{i}$ contains $N = \frac{51200 k}{\eps^3}$ independent samples of $P$ (using $D^2$-sampling). We are interested in $\itS{i} \cap O_{j_i}$.
Let $Y_1, \ldots, Y_N$ be  $N$ independent random variables defined as follows : for any $t$, $1 \leq t \leq N$, $Y_t$ picks an element of $P$ using $D^2$-sampling with respect to
$\C{i-1}$. If this element is not in $O_{j_i}$, it just discards it (i.e., $Y_t$ is null). Let $\gamma$ denote $\frac{\eps^2}{128 k}$. Corollary~\ref{cor:sample}
 and Lemma~\ref{lem:key} imply that
$Y_t$ picks a particular element of $O_{j_i}$ with probability at least $\frac{\gamma}{m_{j_i}}$. We would now like to apply Lemma~\ref{lem:clinaba}
(observe that $N = \frac{400}{\gamma \eps}$). We can do this by a simple coupling argument as follows.
For a particular element $p \in O_{j_i}$,
suppose $Y_t$ assigns probability $\frac{\gamma(p)}{m_{j_i}}$ to it.
 One way of sampling a random variable
$X_t$ as in Lemma~\ref{lem:clinaba} is as follows --  first sample using $Y_t$. If $Y_t$ is null then, $X_t$ is also null. Otherwise, suppose
$Y_t$ picks an element $p$ of $O_{j_i}$. Then, $X_t$ is equal to $p$ with probability $\frac{\gamma}{\gamma(p)}$, null otherwise. It is easy
to check that with probability $\gamma$, $X_t$ is a uniform sample from $O_{j_i}$, and null with probability $1-\gamma$. Now, observe that
the set of elements of $O_{j_i}$ sampled by $Y_1, \ldots, Y_N$ is always a superset of $X_1, \ldots, X_N$. We can now use Lemma~\ref{lem:clinaba}
to finish the proof.
\end{proof}

 Thus, we will take the index $s_i$ in Step 2(i) as the index of the set $\itT{i}$ as guaranteed by the Lemma above.
Finally, by repeating the entire process $2^k$ times, we make sure that we get a $(1+\eps)$-approximate solution
with high probability.
Note that the total running time of our algorithm is
$\left( nd \cdot 2^k \cdot 2^{\tilde{O}(k/\eps)} \right)$.

\noindent
{\bf Removing the $(k,\eps)$-irreducibility assumption :} We now show how to remove this assumption. First note
that we have shown the following result.

\begin{theorem}
If a given point set $(k, \frac{\eps}{(1+\eps/2) \cdot k})$-irreducible, then
there is an algorithm that gives a $(1 + \frac{\eps}{(1+\eps/2) \cdot k})$-approximation to the $k$-means objective
and that runs in time $O(nd \cdot 2^{\tilde{O}(k^2/\eps)})$.
\end{theorem}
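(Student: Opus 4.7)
The plan is to observe that this theorem is an immediate reparametrization of the analysis just completed: apply the algorithm \textbf{Find-k-means} of Figure~\ref{fig:k}, but with the accuracy parameter $\eps$ used inside the algorithm's constants $N$ and $M$ replaced throughout by $\eps_0 := \eps/((1+\eps/2)k)$. Every lemma in Section~\ref{sec:algo} (Lemma~\ref{lem:min} through Lemma~\ref{lem:final}, plus Corollary~\ref{cor:sample}) was proved for an arbitrary irreducibility parameter, so substituting $\eps \mapsto \eps_0$ yields that for any $(k,\eps_0)$-irreducible point set, within a single outer repetition the algorithm finds $k$ centers of cost at most $(1+\eps_0)\,\Delta_k(P)$ with probability at least $2^{-k}$, which is amplified to a constant by the $2^k$ outer iterations.

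The remaining task is the running-time bookkeeping. With the rescaled constants $N = 51200\,k/\eps_0^{3}$ and $M = 100/\eps_0$, the enumeration over $k$-tuples contributes $\binom{N}{M}^{k}$ choices, and $\binom{N}{M} \leq (eN/M)^{M}$ gives
\[
\log_{2}\binom{N}{M}^{k} \;=\; O\!\left(\tfrac{k}{\eps_0}\,\log\tfrac{k}{\eps_0}\right) \;=\; \tilde{O}(k/\eps_0).
\]
Adding the outer factor of $2^{k}$ and the $O(nd)$ cost of one $D^{2}$-sampling round plus one centroid computation per inner step, the total is $O(nd \cdot 2^{k} \cdot 2^{\tilde{O}(k/\eps_0)})$. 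Substituting $1/\eps_0 = (1+\eps/2)k/\eps = O(k/\eps)$ converts the exponent into $\tilde{O}(k^{2}/\eps)$, matching the claimed bound.

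There is no substantive new obstacle: the theorem is a bookkeeping step recording what the section has already established, packaged in a form convenient for the upcoming removal of the irreducibility assumption. The particular choice $\eps_0 = \eps/((1+\eps/2)k)$ is calibrated so that if one later removes the irreducibility assumption by recursing on $k$ (each failure of $(k,\eps_0)$-irreducibility reducing to a $(k-1)$-means problem at multiplicative cost $(1+\eps_0)$), the $k$-fold accumulated loss $(1+\eps_0)^{k}$ telescopes cleanly to $1+\eps$. The only care required is to verify that the exponent on $k$ in the running time becomes exactly $2$ after the substitution $\eps \mapsto \eps_0$, which the display above confirms.
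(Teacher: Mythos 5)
Your proposal is correct and follows exactly the paper's own route: the paper's proof of this theorem is the one-line instruction to substitute $\eps \mapsto \eps/((1+\eps/2)k)$ into the analysis already completed, and you simply flesh out the running-time bookkeeping that this substitution entails, verifying $\tilde{O}(k/\eps_0) = \tilde{O}(k^2/\eps)$. No new idea is needed and none is introduced.
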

\begin{proof}
The proof can be obtained by replacing $\eps$ by $\frac{\eps}{(1 + \eps/2) \cdot k}$ in the above analysis. \end{proof}

Suppose the point set $P$ is not $(k,\frac{\eps}{(1+\eps/2) \cdot k})$-irreducible.
In that case it will be sufficient to find fewer centers that $(1 + \eps)$-approximate the $k$-means objective.
The next lemma shows this more formally.

\begin{theorem}
There is an algorithm that runs in time $O(nd \cdot 2^{\tilde{O}(k^2/\eps)})$ and
gives a $(1 + \eps)$-approximation to the $k$-means objective.
\end{theorem}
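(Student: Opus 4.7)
The plan is to reduce the general case to the irreducible case already handled by the previous theorem. The key observation is that if $P$ fails to be $(j,\gamma)$-irreducible, then $\Delta_{j-1}(P) < (1+\gamma)\Delta_j(P)$, so clustering with $j-1$ centers is nearly as cheap as with $j$ centers; chaining this from $k$ downward to the largest index $k^*$ at which $P$ is still irreducible will show that an approximately optimal $k^*$-means solution is simultaneously an approximately optimal $k$-means solution, and such a solution can be produced by directly invoking the previous theorem at $k'=k^*$.

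Concretely, I will set $\gamma = \frac{\eps}{2(k+1)}$ and, for each $k' \in \{1,2,\ldots,k\}$, run the previous theorem's algorithm configured to yield a $(1+\gamma)$-approximation for the $k'$-means problem under $(k',\gamma)$-irreducibility, obtaining a candidate solution $C_{k'}$. The overall algorithm outputs the cheapest among $C_1,\ldots,C_k$ (padding by duplicating a center to reach exactly $k$ centers if needed). Letting $k^*$ be the largest index in $\{1,\ldots,k\}$ at which $P$ is $(k^*,\gamma)$-irreducible (using the convention $\Delta_0(P)=\infty$ to ensure $k^*\geq 1$), reducibility at every $j$ with $k^* < j \le k$ combined with telescoping gives
\[
\Delta_{k^*}(P) \ < \ (1+\gamma)^{k-k^*}\Delta_k(P) \ \le \ (1+\gamma)^k\,\Delta_k(P),
\]
and since the previous theorem ensures $\Delta(P,C_{k^*}) \le (1+\gamma)\Delta_{k^*}(P)$, the returned solution costs at most $(1+\gamma)^{k+1}\Delta_k(P) \le (1+\eps)\Delta_k(P)$, using the routine inequality $(1+\gamma)^{k+1}\le 1+2(k+1)\gamma$ valid for $(k+1)\gamma \le 1$.

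For the running time, each of the $k$ invocations of the previous theorem --- parametrized to achieve a $(1+\gamma)$-approximation with $\gamma = \Theta(\eps/k)$ and at most $k$ centers --- runs in time $O(nd \cdot 2^{\tilde{O}(k/\gamma)}) = O(nd\cdot 2^{\tilde{O}(k^2/\eps)})$, and the outer factor of $k$ from the loop is absorbed into $\tilde{O}(\cdot)$. I do not anticipate a serious obstacle: the construction is a black-box use of the previous theorem, and the only subtleties are (i) choosing $\gamma$ so that the telescoping blow-up $(1+\gamma)^{k+1}$ stays within $1+\eps$ and (ii) handling the edge case $k^*=1$ (which can also be sidestepped by solving $1$-means exactly via the centroid).
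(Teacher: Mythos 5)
Your proposal is correct and follows the same high-level strategy as the paper: locate the largest index $k^*$ at which $P$ is still irreducible for a parameter of order $\eps/k$, invoke the preceding theorem at that index, and then telescope the reducibility inequalities from $k$ down to $k^*$ to conclude that $\Delta_{k^*}(P)$ is within a $(1+\gamma)^{k-k^*}$ factor of $\Delta_k(P)$. Your constant $\gamma = \eps/(2(k+1))$ differs from the paper's $\eps/((1+\eps/2)k)$ but serves the same purpose, and your use of $(1+\gamma)^{k+1}\le 1+2(k+1)\gamma$ is the standard bound the paper implicitly relies on. One place where your write-up is actually more careful than the paper's: the paper simply says ``we consider the $i$-means problem'' without explaining how the algorithm determines $i=k^*$ (this value depends on the unknown optimal costs $\Delta_j(P)$, which the algorithm cannot compute). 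Your device of running the subroutine for every $k'\in\{1,\dots,k\}$, padding each candidate to $k$ centers, and outputting the cheapest one closes this gap cleanly: for $k'=k^*$ the irreducibility hypothesis of the previous theorem is genuinely satisfied and the guarantee kicks in, and the minimum over all $k'$ can only be cheaper. The convention $\Delta_0(P)=\infty$ to force $k^*\geq 1$ is a tidy way of absorbing the paper's separate ``no such $i$ exists'' case. The running-time accounting ($k$ invocations, each $O(nd\cdot 2^{\tilde{O}(k/\gamma)})=O(nd\cdot 2^{\tilde{O}(k^2/\eps)})$, with the outer factor of $k$ absorbed into the $\tilde{O}$) is also correct.
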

\begin{proof}
Let $P$ denote the set of points.  Let $1 < j \leq k$ be the largest index such that $P$ is $(i, \frac{\eps}{(1+\eps/2) \cdot k})$-irreducible.
If no such $i$ exists, then $$\Delta_1(P) \leq \left(1+\frac{\eps}{(1+\eps/2)\cdot k} \right)^k \cdot \Delta_k(P) \leq (1+\eps) \cdot \Delta_k(P), $$
and so picking the centroid of $P$ will give a $(1+\eps)$-approximation.

Suppose such an $i$ exists. In that case, we consider the $i$-means problem and from the previous lemma
we get that there is an algorithm that runs in time $O(nd \cdot 2^i \cdot 2^{\tilde{O}(i^2/\eps)})$ and gives a
$(1 + \frac{\eps}{(1+\eps/2) \cdot k})$-approximation to the $i$-means objective.
Now we have that $$\Delta_i \leq \left(1 + \frac{\eps}{(1+\eps/2) \cdot k} \right)^{k-i} \cdot \Delta_k \leq (1+\eps) \cdot \Delta_k.$$
Thus, we are done.
\end{proof}

\section{Other Distance Measures}\label{sec:other}
In the previous sections, we looked at the $k$-means problem where the dissimilarity or distance measure was the square of Euclidean distance. There are numerous practical clustering problem instances where the dissimilarity measure is not a function of the Euclidean distance. In many cases, the points are not generated from a metric space. In these cases, it makes sense to talk about the general $k$-median problem that can be defined as follows:

\begin{definition}[$k$-median with respect to a dissimilarity measure]
Given a set of $n$ objects $P \subseteq \mathcal{X}$ and a dissimilarity measure $D: \mathcal{X} \times \mathcal{X} \rightarrow \mathbb{R}_{\geq 0}$, find a subset $C$ of $k$ objects (called medians) such that the following objective function is minimized:
\[
\Delta(P, C) = \sum_{p \in P} \min_{c \in C} D(p, c)
\]
\end{definition}

In this section, we will show that our algorithm and analysis can be easily generalized and extended to dissimilarity measures that satisfy some simple properties. We will look at some interesting examples.
We start by making the observation that in the entire analysis of the previous section the only properties of the distance measure that we used were given in Lemmas~\ref{lem:inaba}, \ref{lemma:2}, and \ref{lemma:3}. We also used the symmetry property of the Euclidean metric implicitly. This motivates us to consider dissimilarity measures on spaces where these lemmas  (or mild relaxations of these) are true. For such measures, we may replace $d(p, q)^2$ (this is the square of the Euclidean distance) by $D(p, q)$ in all places in the previous section and obtain a similar result. We will now formalize these ideas.

First, we will describe a property that captures Lemma~\ref{lem:inaba}. This is similar to a definition by Ackermann et. al. \cite{abs10} who discuss PTAS for the $k$-median problem with respect to metric and non-metric distance measures.

\begin{definition}[$(f, \gamma, \delta)$-Sampling property]
Given $0 < \gamma, \delta \leq 1$ and $f: \mathbb{R} \times \mathbb{R} \rightarrow \mathbb{R}$, a distance measure $D$ over space $\mathcal{X}$ is said to have $(f, \gamma, \delta)$-sampling property if the following holds:
for any set $P \subseteq \mathcal{X}$, a uniformly random sample $S$ of $f(\gamma, \delta)$ points from $P$ satisfies
\[
Pr\left[\sum_{p \in P} D(p, m(S)) \leq (1+\gamma) \cdot \Delta_1(P)\right] \geq (1 - \delta),
\]
where $m(S) = \frac{\sum_{s \in S} s}{|S|}$ denotes the mean of points in $S$.
\end{definition}

\begin{definition}[Centroid property]
A distance measure $D$ over space $\mathcal{X}$ is said to satisfy the centroid property if for any subset $P \subseteq \mathcal{X}$ and any point $c \in \mathcal{X}$, we have:
\[
\sum_{p \in P} D(p, c) = \Delta_{1}(P) + |P| \cdot D(m(P), c),
\]
where $m(P) = \frac{\sum_{p \in P} p}{|P|}$ denotes the mean of the points in $P$.
\end{definition}

\begin{definition}[$\alpha$-approximate triangle inequality]
Given $\alpha \geq 1$, a distance measure $D$ over space $\mathcal{X}$ is said to satisfy $\alpha$-approximate triangle inequality if for any three points $p, q, r \in \mathcal{X},
D(p, q) \leq \alpha \cdot (D(p, r) + D(r, q))$
\end{definition}

\begin{definition}[$\beta$-approximate symmetry]
Given $0 < \beta \leq 1$, a distance measure $D$ over space $\mathcal{X}$ is said to satisfy $\beta$-symmetric property if for any pair of points $p, q \in \mathcal{X}$, $\beta \cdot D(q, p) \leq D(p, q) \leq \frac{1}{\beta} \cdot D(q, p) $
\end{definition}

The next theorem gives the generalization of our results for distance measures that satisfy the above basic properties. The proof of this theorem follows easily  from the analysis in the previous section. 
The proof of this theorem is given in Appendix~\ref{appendix:A}.

\begin{theorem}\label{thm:other}
Let $f:\mathbb{R} \times \mathbb{R} \rightarrow \mathbb{R}$. Let $\alpha \geq 0$, $0 < \beta \leq 1$, and $0 < \delta < 1/2$ be constants and let $0 < \eps \leq 1/2$. Let $\eta = \frac{2\alpha^2}{\beta^2}(1 + 1/\beta)$. Let $D$ be a distance measure over space $\mathcal{X}$ that $D$ follows:
\begin{enumerate}
\item $\beta$-approximate symmetry property,

\item $\alpha$-approximate triangle inequality,

\item Centroid property, and

\item $(f, \epsilon, \delta)$-sampling property.
\end{enumerate}
Then there is an algorithm that runs in time $O\left(nd  \cdot 2^{\tilde{O}(k \cdot f(\epsilon/\eta k, 0.2))}\right)$ and gives a $(1 + \epsilon)$-approximation to the $k$-median objective for any point set $P \subseteq \mathcal{X}, |P| = n$.
\end{theorem}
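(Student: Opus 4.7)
The plan is to mirror the analysis of Section~\ref{sec:algo} step by step, substituting each appeal to the squared-Euclidean structure by the corresponding generalized property and tracking how the constants $\alpha$, $\beta$, and the sampling bound $f$ propagate through the bookkeeping. I would run the same algorithm as in Figure~\ref{fig:k} with the inner sample size set to $M = f(\epsilon/(\eta k), 0.2)$ and the outer sample size chosen as $N = \Theta(k M / \epsilon)$, so that at the very end the $(f, \gamma, \delta)$-sampling property can be invoked with $\gamma = \epsilon/(\eta k)$ and $\delta = 0.2$. The reduction that removes the irreducibility assumption (rescaling $\epsilon$ by $1/((1+\epsilon/2)k)$) carries over verbatim since it only uses the telescoping $\Delta_1/\Delta_k$ chain; this rescaling is what contributes the extra factor of $k$ in the argument to $f$.

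Next I would re-derive the four structural lemmas of Section~\ref{sec:algo} with $d(\cdot,\cdot)^2$ replaced by $D(\cdot,\cdot)$. Lemma~\ref{lem:min} (pairwise separation of the optimal centers) uses only the centroid property and irreducibility, so it goes through with mild constant changes and yields a bound of the shape $D(c_i, c_j) \geq \Omega(\epsilon)(r_i + r_j)$. Lemma~\ref{lem:sampled} and Corollary~\ref{cor:sample} are accounting arguments over the optimal clusters that transfer unchanged. The crucial step is the generalized Lemma~\ref{lem:key}: the upper bound on $\Delta(O_l, \C{i-1})$ uses the centroid property plus one application of the $\alpha$-approximate triangle inequality, whereas the lower bound on $D(p, c_t')$ requires running the $\alpha$-approximate triangle inequality in the \emph{reverse} direction, which forces one use of $\beta$-approximate symmetry to swap arguments. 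Accounting for the two applications of $\alpha$ and the factor $1/\beta$ from symmetry reconstructs precisely $\eta = (2\alpha^2/\beta^2)(1+1/\beta)$, giving the minimum per-point sampling probability $\Omega(\epsilon/(\eta k\, m_{j_i}))$ for every $p \in O_{j_i}$.

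With this lower bound in hand, Lemma~\ref{lem:clinaba} goes through by the same coupling argument, but invoking the $(f,\gamma,\delta)$-sampling hypothesis in place of Lemma~\ref{lem:inaba}; this is what inflates the inner sample size from $O(1/\epsilon)$ to $f(\epsilon/(\eta k), 0.2)$. Lemma~\ref{lem:final} is then produced from this clustered Inaba-like lemma exactly as before, and an induction over the $k$ iterations guarantees, with probability at least $2^{-k}$, a set of $k$ medians whose cost is within a factor $(1 + \epsilon/(\eta k))^k \leq 1 + \epsilon$ of the optimum; repeating the outer loop $2^k$ times drives the failure probability down. The running-time bookkeeping $\binom{N}{M}^k \cdot 2^k \cdot nd = nd \cdot 2^{\tilde O(kM)}$ then matches the claimed bound.

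The main technical obstacle is the careful re-derivation of Lemma~\ref{lem:key} in the non-symmetric, non-metric setting. The Euclidean proof implicitly relies on being able to write $d(p,q)^2 \leq 2(d(p,r)^2 + d(r,q)^2)$ in either argument and to freely interchange the two arguments of the distance. Here every such swap must be paid for by a factor of $\alpha$ or $1/\beta$, and the order in which the swaps are performed determines at what exponent these factors appear. Getting this ordering right is what pins the generalization constant to exactly $\eta$ rather than to a larger power of $\alpha/\beta$; once this careful derivation is done, the remainder of the proof is mechanical substitution into the inequalities of Section~\ref{sec:algo}.
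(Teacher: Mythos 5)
Your proposal follows the paper's Appendix~A proof essentially line for line: replace $d(\cdot,\cdot)^2$ by $D(\cdot,\cdot)$, regenerate Lemmas~\ref{lem:min}--\ref{lem:final} with the centroid property, $\alpha$-approximate triangle inequality, $\beta$-approximate symmetry, and the $(f,\gamma,\delta)$-sampling hypothesis, and observe that the only delicate step is the generalized Lemma~\ref{lem:key}, where two applications of the asymmetric triangle inequality and the symmetry swaps fix $\eta = (2\alpha^2/\beta^2)(1+1/\beta)$. The only minor slips (the paper's $N$ is $\Theta(kM/\eps^2)$, not $\Theta(kM/\eps)$, and the per-cluster invariant contributes a single multiplicative $(1+\eps/\eta)$ rather than a compounded $(1+\eps/(\eta k))^k$ --- the $k$-fold telescoping comes only from the irreducibility removal) are immaterial to the $\tilde{O}$ running-time bound and the final approximation factor.
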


The above theorem gives a characterization for when our non-uniform sampling based algorithm can be used to obtain a PTAS for a dissimilarity measure. The important question now is whether there exist interesting distance measures that satisfy the properties in the above Theorem. Next, we look at some distance measures other than squared Euclidean distance, that satisfy such properties.

\subsection{Mahalanobis distance}

Here the domain is $\mathbb{R}^d$ and the distance is defined with respect to a positive definite matrix $A \in \mathbb{R}^{d \times d}$. The distance between two points $p, q \in \mathbb{R}^d$ is given by $D_{A}(p, q) = (p - q)^{T} \cdot A \cdot (p-q)$. Now, we discuss the properties in Theorem~\ref{thm:other}.
\begin{enumerate}
\item ({\it Symmetry}) For any pair of points $p, q \in \mathbb{R}^d$, we have $D_A(p, q) = D_{A}(q, p)$. So, the $\beta$-approximate symmetry property holds for $\beta = 1$.

\item ({\it Triangle inequality}) \cite{ab09} shows that $\alpha$-approximate triangle inequality holds for $\alpha = 2$.

\item ({\it Centroid}) The centroid property is shown to hold for Mahalanobis distance in \cite{ban05}.

\item ({\it Sampling}) \cite{abs10} (see Corollary 3.7) show that Mahalanobis distance satisfy the $(f, \gamma, \delta)$-sampling property for $f(\gamma, \delta) = 1/(\gamma \delta)$.
\end{enumerate}

\noindent
Using the above properties and Theorem~\ref{thm:other}, we get the following result.

\begin{theorem}[$k$-median w.r.t. Mahalanobis distance]
Let $0 < \eps \leq 1/2$. There is an algorithm that runs in time $O(nd \cdot 2^{\tilde{O}(k^2/\eps)})$ and gives a $(1+\eps)$-approximation to the $k$-median objective function w.r.t. Mahalanobis distances for any point set $P \in \mathbb{R}^d, |P| = n$.
\end{theorem}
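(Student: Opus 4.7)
The plan is to obtain this theorem as a direct consequence of Theorem~\ref{thm:other}, using the four verified properties of the Mahalanobis distance $D_A$ that were listed immediately before the statement. I do not expect any serious obstacle: the work has essentially been done, and the only task is to substitute the constants for $(\alpha, \beta, f)$ into the running-time bound given by the general theorem and check that the resulting exponent is $\tilde{O}(k^2/\eps)$.

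First, I would record the constants. From the four bullet points for Mahalanobis distance we have $\beta = 1$ (exact symmetry), $\alpha = 2$ (approximate triangle inequality), the centroid property holds exactly, and the sampling property holds with $f(\gamma,\delta) = 1/(\gamma\delta)$. These verify hypotheses (1)--(4) of Theorem~\ref{thm:other}. Plugging $\alpha=2$ and $\beta=1$ into the definition of $\eta$ from that theorem gives $\eta = \frac{2\alpha^2}{\beta^2}\bigl(1 + \tfrac{1}{\beta}\bigr) = 2 \cdot 4 \cdot 2 = 16$, which is an absolute constant independent of the input.

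Next, I would apply Theorem~\ref{thm:other} with $\delta = 0.2$, the given $\eps$, and these constants. The theorem then yields a $(1+\eps)$-approximation algorithm with running time
\[
O\!\left(nd \cdot 2^{\tilde{O}\bigl(k \cdot f(\eps/(\eta k),\, 0.2)\bigr)}\right).
\]
Substituting the sampling function $f(\gamma,\delta) = 1/(\gamma \delta)$ of Mahalanobis distance gives
\[
k \cdot f\!\bigl(\eps/(\eta k),\, 0.2\bigr) \;=\; k \cdot \frac{\eta k}{0.2 \cdot \eps} \;=\; \frac{80\, k^2}{\eps} \;=\; O(k^2/\eps),
\]
since $\eta = 16$ is a constant. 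Therefore the overall running time is $O\!\bigl(nd \cdot 2^{\tilde{O}(k^2/\eps)}\bigr)$, matching the claim. This would complete the proof; the argument is simply a bookkeeping invocation of the general black-box theorem together with the previously stated properties of $D_A$, so the only potential issue is a harmless mismatch in the hidden constants inside the $\tilde{O}(\cdot)$ notation, which is absorbed by definition.
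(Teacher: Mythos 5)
Your proposal is correct and matches the paper's approach exactly: the paper likewise obtains this theorem as an immediate corollary of Theorem~\ref{thm:other} using the four listed properties of Mahalanobis distance. Your arithmetic ($\eta = 16$, and $k \cdot f(\eps/(\eta k), 0.2) = 80k^2/\eps = O(k^2/\eps)$) is the implicit bookkeeping the paper leaves to the reader.
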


\subsection{$\mu$-similar Bregman divergence}

We start by defining Bregman divergence and then discuss the required properties.

\begin{definition}[Bregman Divergence]
Let $\phi : X \rightarrow \mathbb{R}^d$ be a continuously-differentiable real-valued and strictly convex function defined on a closed convex set $X$.The Bregman distance associated with $\phi$ for points $p,q \in X$  is:
\begin{eqnarray*}
D_{\phi}(p,q) = \phi(p) - \phi(q)  - \Delta \phi(q)^T(p-q)
\end{eqnarray*}
Where $\Delta \phi(q)$ denotes the gradient of $\phi$ at point $q$
\end{definition}

Intuitively this can be thought of as the difference between the value of $\phi$ at point $p$ and the value of the first-order Taylor expansion of $\phi$ around point $q$ evaluated at point $p$. Bregman divergence includes the following popular distance measures:
\begin{itemize}
\item {\em Euclidean distance.} $D_{\phi}(p, q) = ||p-q||^2$. Here $\phi(x) = ||x||^2$.

\item {\em Kullback-Leibler divergence.} $D_{\phi}(p, q) = \sum_{i} p_i \cdot \ln{\frac{p_i}{q_i}} - \sum_{i}(p_i - q_i)$. Here $D_{\phi}(x) = \sum_{i} x_i \cdot \ln{x_i} - x_i$.

\item {\em Itakura-Saito divergence.} $D_{\phi}(p, q) = \sum_i \left(\ln{\frac{p_i}{q_i}}  - \ln{\frac{q_i}{p_i}} - 1\right)$. Here $\phi(x) = - \sum_{i} \ln{x_i}$.

\item {\em Mahalanobis distance.} For a symmetric positive definite matrix $U \in \mathbb{R}^{d \times d}$, the Mahalanobis distance is defined as:
$D_U(p,q) = (p-q)^TU(p-q)$.
Here $\phi_{U}(x) = x^{T} U x$.
\end{itemize}

Bregman divergences have been shown to satisfy the Centroid property by Banerjee et. al. \cite{ban05}. All Bregman divergences do not necessarily satisfy the symmetry property or the triangle inequality. So, we cannot hope to use our results for the class of all Bregman divergences. On the other hand, some of the Bregman divergences that are used in practice satisfy a property called {\em $\mu$-similarity} (see \cite{a09} for an overview of such Bregman divergences). Next, we give the definition of $\mu$-similarity.

\begin{definition}[$\mu$-similar Bregman divergence]
A Bregman divergence $D_{\phi}$ on domain $\mathbb{X} \subseteq \mathbb{R}^d$ is called $\mu$-similar for constant $0 < \mu \leq 1$, if there exists a symmetric positive definite matrix $U$ such that for Mahalanobis distance $D_U$ and for each $p,q \in \mathbb{X}$ we have:
\begin{equation}\label{eqn:similar}
\mu \cdot D_U(p,q) \leq D_{\phi}(p,q) \leq D_U(p,q).
\end{equation}
\end{definition}

Now, a $\mu$-similar Bregman divergence can easily be shown to satisfy approximate symmetry and triangle inequality properties. This is formalized in the following simple lemma. The proof of this lemma is given in the Appendix~\ref{appendix:B}.

\begin{lemma}\label{lemma:mu-similar}
Let $0 < \mu \leq 1$. Any $\mu$-similar Bregman divergence satisfies the $\mu$-approximate symmetry property and $(2/\mu)$-approximate triangle inequality.
\end{lemma}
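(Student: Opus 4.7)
The plan is to exploit the two-sided sandwich inequality $\mu\,D_U(p,q)\le D_\phi(p,q)\le D_U(p,q)$ and transfer the relevant properties from the Mahalanobis distance $D_U$ (which is well-behaved) to $D_\phi$ by pushing inequalities through on either side. Since $U$ is symmetric positive definite, $D_U$ is exactly symmetric and, as already noted in the paper, satisfies the $2$-approximate triangle inequality. Both desired properties for $D_\phi$ should then fall out by two-line chains of inequalities.

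For the $\mu$-approximate symmetry, I would start from the upper bound $D_\phi(p,q)\le D_U(p,q)$, use the exact symmetry $D_U(p,q)=D_U(q,p)$, and then apply the lower bound $D_U(q,p)\le \tfrac{1}{\mu}D_\phi(q,p)$. Chaining these gives $D_\phi(p,q)\le \tfrac{1}{\mu}\,D_\phi(q,p)$, and by swapping the roles of $p$ and $q$ one also gets $\mu\,D_\phi(q,p)\le D_\phi(p,q)$, matching the definition of $\mu$-approximate symmetry with $\beta=\mu$.

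For the $(2/\mu)$-approximate triangle inequality, I would again start with $D_\phi(p,q)\le D_U(p,q)$, then apply the Mahalanobis triangle inequality $D_U(p,q)\le 2\bigl(D_U(p,r)+D_U(r,q)\bigr)$, and finally use $D_U(x,y)\le \tfrac{1}{\mu}D_\phi(x,y)$ on each of the two terms on the right. This yields $D_\phi(p,q)\le \tfrac{2}{\mu}\bigl(D_\phi(p,r)+D_\phi(r,q)\bigr)$, as required.

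There is essentially no obstacle here; the lemma is really a bookkeeping exercise that says ``a distance sandwiched by a nice distance inherits approximate versions of its properties, with the approximation factors degraded by $\mu$.'' The only thing to be careful about is invoking the correct bound (upper vs.\ lower) on each side of the sandwich and recalling the already-established constant $\alpha=2$ for the Mahalanobis triangle inequality; both are straightforward.
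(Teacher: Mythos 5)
Your proof is correct and is essentially identical to the paper's: both for symmetry and for the triangle inequality you chain the sandwich bounds $\mu D_U \le D_\phi \le D_U$ around the exact symmetry and $2$-approximate triangle inequality of the Mahalanobis distance $D_U$, exactly as Appendix~B does (the paper merely writes the triangle-inequality chain in the direction $D_\phi(p,q)+D_\phi(q,r)\ge(\mu/2)D_\phi(p,r)$, which is the same inequality rearranged).
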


Finally, we use the sampling property from Ackermann et. al. \cite{abs10} who show that any $\mu$-similar Bregman divergence satisfy the $(f, \gamma, \delta)$-sampling property for $f(\gamma, \delta) = \frac{1}{\mu \gamma \delta}$.

Using all the results mentioned above we get the following Theorem for $\mu$-similar Bregman divergences.

\begin{theorem}[$k$-median w.r.t. $\mu$-similar Bregman divergences]
Let $0 < \mu \leq 1$ and $0 < \eps \leq 1/2$. There is an algorithm that runs in time $O\left( nd \cdot 2^{\tilde{O}\left(\frac{k^2}{\mu \cdot \eps} \right)} \right)$ and gives a $(1+\eps)$-approximation to the $k$-median objective function w.r.t. $\mu$-similar Bregman divergence for any point set $P \in \mathbb{R}^d, |P| = n$.
\end{theorem}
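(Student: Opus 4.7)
The plan is to obtain this theorem as a direct corollary of Theorem~\ref{thm:other}, so the work amounts to verifying its four hypotheses for a $\mu$-similar Bregman divergence $D_\phi$ and then substituting the resulting parameters into the running-time formula.

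First I would collect the four structural properties. For symmetry and the (approximate) triangle inequality, Lemma~\ref{lemma:mu-similar} gives exactly what is needed: $D_\phi$ is $\beta$-approximately symmetric with $\beta=\mu$ and satisfies the $\alpha$-approximate triangle inequality with $\alpha = 2/\mu$. For the centroid property, I would invoke the classical result of Banerjee et al.\ \cite{ban05}, which shows that for every Bregman divergence $D_\phi$ and every finite set $P\subseteq\mathcal{X}$, the arithmetic mean $m(P)$ is the unique minimizer of $\sum_{p\in P} D_\phi(p,\cdot)$, and moreover $\sum_{p\in P} D_\phi(p,c) = \Delta_1(P) + |P|\cdot D_\phi(m(P),c)$; this is precisely the centroid property as stated. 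For the sampling property, I would quote the result of Ackermann et al.\ \cite{abs10} that a $\mu$-similar Bregman divergence has the $(f,\gamma,\delta)$-sampling property with $f(\gamma,\delta)=\frac{1}{\mu\gamma\delta}$.

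Having verified the hypotheses, the rest is just plugging into Theorem~\ref{thm:other}. With $\alpha = 2/\mu$ and $\beta=\mu$, the constant $\eta = \frac{2\alpha^2}{\beta^2}(1+1/\beta)$ depends only on $\mu$, so $\eta = O(1/\mu^5)$; in particular, for fixed $\mu$ it is a constant that can be absorbed into the $\tilde O(\cdot)$ notation. Substituting $f(\gamma,\delta)=\frac{1}{\mu\gamma\delta}$ with $\gamma = \eps/(\eta k)$ and $\delta = 0.2$ gives
\[
k\cdot f\!\left(\tfrac{\eps}{\eta k},0.2\right) \;=\; k\cdot \tfrac{\eta k}{0.2\,\mu\,\eps} \;=\; \tilde O\!\left(\tfrac{k^2}{\mu\,\eps}\right),
\]
so Theorem~\ref{thm:other} yields the algorithm running in time $O\bigl(nd\cdot 2^{\tilde O(k^2/(\mu\eps))}\bigr)$ and returning a $(1+\eps)$-approximate $k$-median solution.

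I do not expect a real obstacle here: the statement is essentially a checklist corollary. The only mildly delicate point is making sure that the ``centroid'' referenced in the generic framework (Step 2(b) of the algorithm, which uses $m(T) = \tfrac{1}{|T|}\sum_{p\in T} p$) coincides with the optimal $1$-median for $D_\phi$; this is exactly what the Banerjee et al.\ characterization of Bregman centroids guarantees, and it is also what is implicitly used in both the centroid property and the sampling property above, so everything lines up.
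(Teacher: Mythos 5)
Your proposal is correct and follows exactly the same route as the paper: invoke Lemma~\ref{lemma:mu-similar} for the $\mu$-approximate symmetry and $(2/\mu)$-approximate triangle inequality, Banerjee et al.\ for the centroid property, Ackermann et al.\ for the $(f,\gamma,\delta)$-sampling property with $f(\gamma,\delta)=1/(\mu\gamma\delta)$, and then substitute into Theorem~\ref{thm:other}. Your extra remark that the $\eta = O(1/\mu^5)$ factor is being absorbed into the $\tilde O$ (so the stated exponent $\tilde O(k^2/(\mu\eps))$ really treats $\mu$ as a constant apart from the one explicit factor coming from $f$) is an accurate reading of what the paper is implicitly doing.
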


\bibliography{paper}

\begin{thebibliography}{10}

\bibitem{a09}
Marcel~R. Ackermann.
\newblock {\em {A}lgorithms for the {B}regman k-{M}edian {P}roblem}.
\newblock PhD thesis, 2010.

\bibitem{ab09}
Marcel~R. Ackermann and Johannes Bl\"{o}mer.
\newblock Coresets and approximate clustering for bregman divergences.
\newblock In {\em ACM SIAM Symposium on Discrete Algorithms}, pages 1088--1097,
  2009.

\bibitem{abs10}
Marcel~R. Ackermann, Johannes Bl\"{o}mer, and Christian Sohler.
\newblock Clustering for metric and nonmetric distance measures.
\newblock {\em ACM Trans. Algorithms}, 6:59:1--59:26, September 2010.

\bibitem{AggarwalDK09}
Ankit Aggarwal, Amit Deshpande, and Ravi Kannan.
\newblock Adaptive sampling for k-means clustering.
\newblock In {\em APPROX-RANDOM}, pages 15--28, 2009.

\bibitem{AJMonteleoni09}
Nir Ailon, Ragesh Jaiswal, and Claire Monteleoni.
\newblock Streaming k-means approximation.
\newblock In {\em Advances in Neural Information Processing Systems 22}, pages
  10--18. 2009.

\bibitem{ArthurV06}
David Arthur and Sergei Vassilvitskii.
\newblock How slow is the {\it k}-means method?
\newblock In {\em Proc. 22nd Annual Symposium on Computational Geometry}, pages
  144--153, 2006.

\bibitem{ArthurV07}
David Arthur and Sergei Vassilvitskii.
\newblock k-means++: the advantages of careful seeding.
\newblock In {\em ACM-SIAM Symposium on Discrete Algorithms}, pages 1027--1035,
  2007.

\bibitem{AwasthiBS10}
Pranjal Awasthi, Avrim Blum, and Or~Sheffet.
\newblock Stability yields a ptas for k-median and k-means clustering.
\newblock In {\em FOCS}, pages 309--318, 2010.

\bibitem{BadoiuHI02}
Mihai Badoiu, Sariel Har-Peled, and Piotr Indyk.
\newblock Approximate clustering via core-sets.
\newblock In {\em STOC}, pages 250--257, 2002.

\bibitem{ban05}
Arindam Banerjee, Srujana Merugu, Inderjit~S. Dhillon, and Joydeep Ghosh.
\newblock Clustering with bregman divergences.
\newblock {\em J. Mach. Learn. Res.}, 6:1705--1749, December 2005.

\bibitem{broder97}
A.~Broder, S.~Glassman, M.~Manasse, and G.~Zweig.
\newblock Syntactic clustering of the web.

\bibitem{Chen06}
Ke~Chen.
\newblock On {$k$}-median clustering in high dimensions.
\newblock In {\em SODA}, pages 1177--1185, 2006.

\bibitem{das08}
Sanjoy Dasgupta.
\newblock The hardness of $k$-means clustering.
\newblock Technical Report CS2008-0916, Department of Computer Science and
  Engineering, University of California San Diego, 2008.

\bibitem{VegaKKR03}
Wenceslas~Fernandez de~la Vega, Marek Karpinski, Claire Kenyon, and Yuval
  Rabani.
\newblock Approximation schemes for clustering problems.
\newblock In {\em ACM Symposium on Theory of Computing}, pages 50--58, 2003.

\bibitem{deer90}
S.~Deerwester, S.T. Dumais, T.K. Landauer, G.W. Furnas, and A.R. Harshman.
\newblock Indexing by latent semantic analysis.
\newblock {\em Journal of the American Society for Information Science}, 1990.

\bibitem{faloutsos}
C.~Faloutsos, R.~Barber, M.~Flickner, and J.~Hafner.
\newblock Efficient and effective querying by image content.
\newblock {\em Journal of Intelligent Information Systems}, 1994.

\bibitem{FeldmanMS07}
Dan Feldman, Morteza Monemizadeh, and Christian Sohler.
\newblock A ptas for k-means clustering based on weak coresets.
\newblock In {\em Symposium on Computational Geometry}, pages 11--18, 2007.

\bibitem{Har-PeledM04}
Sariel Har-Peled and Soham Mazumdar.
\newblock On coresets for k-means and k-median clustering.
\newblock In {\em ACM Symposium on Theory of Computing}, pages 291--300, 2004.

\bibitem{ps05}
Sariel Har-Peled and Bardia Sadri.
\newblock How fast is the $k$-means method?
\newblock In {\em ACM SIAM Symposium on Discrete Algorithms}, pages 877--885,
  2005.

\bibitem{inaba}
M.~Inaba, N.~Katoh, and H.~Imai.
\newblock Applications of weighted voronoi diagrams and randomization to
  variance based k-clustering.
\newblock In {\em Proceedings of the tenth annual symposium on Computational
  Geometry}, pages 332--339, 1994.

\bibitem{KumarSS10}
Amit Kumar, Yogish Sabharwal, and Sandeep Sen.
\newblock Linear-time approximation schemes for clustering problems in any
  dimensions.
\newblock {\em J. ACM}, 57(2), 2010.

\bibitem{lloyd}
S.~Lloyd.
\newblock Least squares quantization in {PCM}.
\newblock {\em IEEE Transactions on Information Theory}, 28(2):129--137, 1982.

\bibitem{Matousek00}
J.~Matousek.
\newblock On approximate geometric k-clustering.
\newblock {\em Discrete and Computational Geometry}, 2000.

\bibitem{OstrovskyRSS06}
Rafail Ostrovsky, Yuval Rabani, Leonard~J. Schulman, and Chaitanya Swamy.
\newblock The effectiveness of lloyd-type methods for the k-means problem.
\newblock In {\em Proc. 47th IEEE FOCS}, pages 165--176, 2006.

\bibitem{Swain}
M.J. Swain and D.H. Ballard.
\newblock Color indexing.
\newblock {\em International Journal of Computer Vision}, 1991.

\end{thebibliography}

\appendix

\section{Proof of Theorem~\ref{thm:other}}\label{appendix:A}
Here we give a proof of Theorem~\ref{thm:other}. For the proof, we repeat the analysis in Section~\ref{sec:algo} almost word-by-word. One the main things we will be doing here is replacing all instances of $d(p, q)^2$ in Section~\ref{sec:algo} with $D(p, q)$. So, this section will look very similar to Section~\ref{sec:algo}. 
First we will restate Theorem~\ref{thm:other}.

\begin{thm}[Restatement of Theorem~\ref{thm:other}]
Let $f:\mathbb{R} \times \mathbb{R} \rightarrow \mathbb{R}$. Let $\alpha \geq 0$, $0 < \beta \leq 1$, and $0 < \delta < 1/2$ be constants and let $0 < \eps \leq 1/2$. Let $\eta = \frac{2\alpha^2}{\beta^2} (1 + 1/\beta)$. Let $D$ be a distance measure over space $\mathcal{X}$ that $D$ follows:
\begin{enumerate}
\item $\beta$-approximate symmetry property,

\item $\alpha$-approximate triangle inequality,

\item Centroid property, and

\item $(f, \epsilon, \delta)$-sampling property.
\end{enumerate}
Then there is an algorithm that runs in time $O\left(nd  \cdot 2^{\tilde{O}(k \cdot f(\epsilon/\eta k, 0.2))}\right)$ and gives a $(1 + \epsilon)$-approximation to the $k$-median objective for any point set $P \subseteq \mathcal{X}, |P| = n$.
\end{thm}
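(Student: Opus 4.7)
The plan is to re-run the algorithm of Figure~\ref{fig:k} verbatim, only rescaling the sample size $N$ and the subset size $M$, and then to repeat the analysis of Section~\ref{sec:algo} with each occurrence of the squared Euclidean distance $d(\cdot,\cdot)^2$ replaced by $D(\cdot,\cdot)$. The three structural facts that the original analysis leaned on are Lemmas~\ref{lem:inaba}, \ref{lemma:2}, and \ref{lemma:3} (and implicit symmetry of $d^2$); these are now provided, respectively, by the $(f,\gamma,\delta)$-sampling property, the centroid property, the $\alpha$-approximate triangle inequality, and $\beta$-approximate symmetry. The only genuine cost of the generalization is that these approximate variants introduce multiplicative blow-ups controlled by $\alpha$ and $\beta$; collecting them yields exactly the constant $\eta=\tfrac{2\alpha^2}{\beta^2}(1+1/\beta)$ appearing in the theorem. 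Accordingly, I would set $M = f(\epsilon/(\eta k),\,0.2)$, $N = \Theta\!\left(\tfrac{k}{\epsilon}\cdot M\right)$, and work under the assumption that $P$ is $(k,\epsilon/(\eta k))$-irreducible for the core argument.

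\textbf{Step-by-step adaptations.} First I would re-prove the analogue of Lemma~\ref{lem:min}: the centroid property replaces Lemma~\ref{lemma:2} verbatim, so $D(c_i,c_j)\ge (\epsilon/(\eta k))\cdot (r_i+r_j)$ follows in the same way, with $r_i=\Delta_1(O_i)/m_i$ now defined via $D$. Next, the invariant (\ref{eq:inv}) with parameter $\epsilon/(20\eta k)$ is maintained inductively, and the proof of Lemma~\ref{lem:sampled} and Corollary~\ref{cor:sample} goes through unchanged because it only uses irreducibility and the invariant — hence with probability at least $\epsilon/(2\eta k)$ a $D^2$-sample lands in $O_{j_i}$. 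The analogue of Lemma~\ref{lem:key} is the heart of the argument: expanding $\Delta(O_l,\C{i-1})$ via the centroid property and bounding $D(c_l,c_t')$ through the $\alpha$-approximate triangle inequality (applied once, to split through $c_{j_t}$) introduces a factor $\alpha$; reversing arguments via $\beta$-symmetry costs $1/\beta$; and the lower bound on $D(p,c_t')$ requires the triangle inequality to be turned around, contributing another factor of $\alpha$. Combined with the invariant bound on $D(c_{j_t},c_t')$ and the new Lemma~\ref{lem:min}, these assemble to a lower bound of the form $D(p,\C{i-1})/\Delta(O_l,\C{i-1})\ge c(\alpha,\beta)\cdot \epsilon/(\eta k\cdot m_l)$. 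Finally, Lemma~\ref{lem:clinaba} is re-derived from the $(f,\gamma,\delta)$-sampling property in place of Lemma~\ref{lem:inaba}, and the coupling argument of Lemma~\ref{lem:final} is identical, giving that with constant probability $\itS{i}$ contains a subset $\itT{i}$ of size $M$ whose mean is a $(1+\epsilon/(20\eta k))$-approximate center for $O_{j_i}$.

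\textbf{Main obstacle.} The delicate bookkeeping happens in adapting Lemma~\ref{lem:key}. Because we now apply the $\alpha$-approximate triangle inequality twice (once to upper bound $D(c_l,c_t')$ and once to lower bound $D(p,c_t')$), together with $\beta$-symmetry each time we flip the order of arguments in $D$, one must be careful to thread the inequality in the right direction at every step so that the small invariant slack $\epsilon/(20\eta k)\cdot r_t$ is dominated by $D(c_{j_t},c_l)$ via Lemma~\ref{lem:min}. This is exactly the calculation that dictates the definition $\eta=\tfrac{2\alpha^2}{\beta^2}(1+1/\beta)$: picking $\eta$ at least this large ensures that the slack terms can be absorbed and that the per-point sampling probability has the clean $\Theta(1/m_l)$ dependence needed to invoke the sampling property with accuracy $\epsilon/(\eta k)$.

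\textbf{Removing irreducibility and wrapping up.} With the irreducible case in hand, the exact same two-step argument as in the proofs of the two theorems at the end of Section~\ref{sec:algo} applies: either $P$ is $(k,\epsilon/(\eta k))$-irreducible, in which case the above gives a $(1+\epsilon/(\eta k))^k\le (1+\epsilon)$-approximation, or else we pick the largest $i<k$ for which $(i,\epsilon/(\eta k))$-irreducibility holds and solve the $i$-median instance instead, using $\Delta_i\le (1+\epsilon/(\eta k))^{k-i}\Delta_k\le (1+\epsilon)\Delta_k$. Counting the $2^k$ outer repetitions together with the $\binom{N}{M}^k$ choices of tuples yields the claimed running time of $O\!\left(nd\cdot 2^{\tilde O(k\cdot f(\epsilon/(\eta k),0.2))}\right)$.
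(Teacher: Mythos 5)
Your proposal follows the paper's Appendix~\ref{appendix:A} essentially verbatim: the paper re-runs the Section~\ref{sec:algo} analysis with $d(\cdot,\cdot)^2$ replaced by $D(\cdot,\cdot)$, with the centroid property, $\alpha$-approximate triangle inequality, $\beta$-approximate symmetry, and $(f,\gamma,\delta)$-sampling standing in for Lemmas~\ref{lemma:2}, \ref{lemma:3}, the implicit symmetry of $d^2$, and Lemma~\ref{lem:inaba} respectively, and the key bookkeeping happening in the analogue of Lemma~\ref{lem:key} exactly as you describe, followed by the same irreducibility-removal step. A few constants in your sketch are slightly off --- the paper takes $N = \Theta\bigl(k\,f(\eps/\eta,0.2)/\eps^2\bigr)$ rather than $\Theta(kM/\eps)$ (the extra $1/\eps$ comes from multiplying the per-cluster probability of Corollary~\ref{cor:sample-repeat} by the per-point probability of Lemma~\ref{lem:key-repeat}), the lower bound on $D(p,c_t')$ picks up a factor of $\alpha^2$ (two triangle-inequality applications) rather than $\alpha$, and the irreducible case yields a $(1+\gamma)$-approximation directly rather than $(1+\gamma)^k$ --- but none of these affects the stated running time or approximation guarantee, and the structure of the argument is identical to the paper's.
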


We will first assume that the instance is $(k, \eps)$-irreducible for a suitably small parameter $\eps$. 
We shall then get rid of this assumption later as we did in Section~\ref{sec:algo}. 
The algorithm remains the same and is described in Figure~\ref{fig:k-repeat}. 

We develop some notation first. For the rest of the analysis, we will fix a tuple $(s_1, \ldots, s_k)$ --
this will be the ``desired tuple'', i.e., the one for which we can show that the set $C$ gives a good solution.
As our analysis proceeds, we will argue what properties this tuple should have. Let $\C{i}$ be the set $C$
at the beginning of the $i^{th}$ iteration of Step 2(b). To begin with $\C{0}$ is empty. Let $\itS{i}$
be the set $S$ sampled during the $i^{th}$ iteration of Step 2(b), and $\itT{i}$ be the corresponding set $T$
(which is the $s_i^{th}$ subset of $\itS{i}$).

Let $O_1, \ldots, O_k$ be the optimal clusters, and $c_1,...,c_k$ denote the respective optimal cluster centers. 
Further, let $m_i$ denote $|O_i|$, and wlog assume that $m_1 \geq \ldots \geq m_k$. 
Let $r_i$ denote the average cost paid by a point in $O_i$, i.e.,
$$r_i = \frac{\sum_{p \in O_i} D(p,c_i)}{m_i}. $$

\begin{center}
\begin{figure}
\begin{boxedminipage}{5in}
{\bf Find-k-median(P)}

\hspace{0.1in} Let $\eta = \frac{2 \alpha^2}{\beta^2}(1 + 1/\beta)$, $N = \frac{(24 \eta \alpha \beta k) \cdot f(\eps/\eta, 0.2)}{\eps^2}$, $M = f(\eps/\eta, 0.2)$, and $P = \binom{N}{M}$

\hspace{0.1in} 1. {\bf Repeat} $2^k$ times and output the the set of centers $C$ that give least cost

\hspace{0.3in} 2. {\bf Repeat} for all $k$-tuples $(s_1, ..., s_k) \in [P]\times [P] \times .... \times [P]$ and

\hspace{0.3in} \ \ \ \ \ pick the set of centers $C$ that gives least cost

\hspace{0.5in} \ \ \ (a) $C \leftarrow \{\}$

\hspace{0.5in} \ \ \ (b) For $i \leftarrow 1$ to $k$

\hspace{0.7in}\ \ \ \ \  \ \ Sample a set $S$ of $N$ points with $D^2$-sampling (w.r.t. centers $C$)

\hspace{0.7in} \ \ \ \ \ \ \ Let $T$ be the $s_i^{th}$ subset of $S$. \footnote{For a set of size $N$ we consider an arbitrary ordering of the subsets of size $M$ of this set.}

\hspace{0.7in} \ \ \ \ \ \ \ $C \leftarrow C \cup \{m(T)\}$. \footnote{$m(T)$ denote the centroid of the points in $T$.}

\end{boxedminipage}
\caption{The algorithm that gives $(1+\eps)$-approximation for any $(k, \eps)$-irreducible data set. 
Note that the inner loop is executed at most  $2^k \cdot \left( \binom{N}{M} \right)^k \sim 2^k \cdot 2^{{\tilde O} (k \cdot f(\eps/\eta, 0.2))}$ times. }
\label{fig:k-repeat}
\end{figure}
\end{center}

First, we show that any two optimal centers are far enough.
\begin{lemma}
For any $1 \leq i< j \leq k$,
$$ D(c_j, c_i) \geq \eps \cdot (r_i + r_j). $$
\end{lemma}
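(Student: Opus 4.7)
The plan is to mirror the proof of Lemma~\ref{lem:min} almost verbatim, replacing the squared Euclidean distance $d(\cdot,\cdot)^2$ by the abstract dissimilarity $D(\cdot,\cdot)$ and using the centroid property of $D$ (instead of Lemma~\ref{lemma:2}) at the crucial step. The statement is deliberately one-sided in the arguments of $D$ — it bounds $D(c_j,c_i)$ rather than $D(c_i,c_j)$ — precisely because that is the quantity produced by the centroid property when we reassign the points of $O_j$ to $c_i$, so no symmetry manipulation is needed here.

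First I would argue by contradiction: assume $D(c_j,c_i) < \eps(r_i+r_j)$. I would then consider removing $c_j$ from the set of optimal centers and reassigning all points of $O_j$ to $c_i$; this yields a feasible $(k-1)$-median solution whose cost I can compute explicitly. Using the centroid property applied to the set $O_j$ with the external center $c_i$, I get
\[
\sum_{p \in O_j} D(p, c_i) \;=\; \Delta_1(O_j) + m_j \cdot D(m(O_j),c_i) \;=\; m_j r_j + m_j\, D(c_j, c_i),
\]
since $c_j = m(O_j)$ by definition. Adding $\sum_{p\in O_i} D(p,c_i) = m_i r_i$ gives
\[
\Delta(O_i\cup O_j, \{c_i\}) \;=\; m_i r_i + m_j r_j + m_j\, D(c_j,c_i).
\]

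Next I would plug in the contradiction hypothesis $D(c_j,c_i)<\eps(r_i+r_j)$ and use $m_i\ge m_j$ (which holds because $i<j$ in the sorted ordering) to obtain
\[
\Delta(O_i\cup O_j,\{c_i\}) \;<\; m_i r_i + m_j r_j + \eps\,(m_j r_i + m_j r_j) \;\le\; (1+\eps)\,\Delta(O_i\cup O_j,\{c_i,c_j\}).
\]
Summing the unchanged costs of the remaining clusters shows that the $(k-1)$-center set $\{c_1,\ldots,c_k\}\setminus\{c_j\}$ attains cost at most $(1+\eps)\,\Delta_k(P)$, contradicting $(k,\eps)$-irreducibility of $P$.

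The main thing to be careful about — and really the only obstacle — is keeping the direction of $D$ straight at the step where the centroid property is invoked, since $D$ is only $\beta$-approximately symmetric; but because the statement of the lemma already fixes the ``correct'' direction $D(c_j,c_i)$, no use of $\beta$-symmetry is required in this particular lemma. The $\alpha$-approximate triangle inequality is likewise not needed here, reflecting the fact that the original Lemma~\ref{lem:min} also only appealed to the centroid identity and irreducibility.
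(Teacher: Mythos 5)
Your proof is correct and mirrors the paper's own argument essentially word-for-word: apply the centroid property with external center $c_i$ (which naturally yields $D(c_j,c_i)$ in the stated orientation), use $m_i \geq m_j$ from $i<j$ to absorb the cross term, and contradict $(k,\eps)$-irreducibility by exhibiting a $(1+\eps)$-approximate $(k-1)$-center solution. Your remark that neither $\beta$-symmetry nor the $\alpha$-triangle inequality is needed here — and that symmetry is deferred to the follow-up corollary that two-sidedly bounds $D(c_i,c_j)$ — is exactly how the paper organizes the argument.
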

\begin{proof}
Since $i < j$, we have $m_i \geq m_j$.
For the sake of contradiction assume $D(c_j, c_i) <  \eps \cdot (r_i + r_j)$. Then we have,
\begin{eqnarray*}
\Delta(O_i \cup O_j, \{c_i\}) & = & m_i \cdot r_i + m_j \cdot r_j + m_j \cdot D(c_j, c_i)  \quad \textrm{(using Centroid property)}\\
&<&  m_i \cdot r_i + m_j \cdot r_j + m_j \cdot \eps \cdot (r_i + r_j)\\
&\leq& (1 + \eps) \cdot m_i \cdot r_i +(1 + \eps) \cdot  m_j \cdot r_j \quad \textrm{(since $m_i \geq m_j$)}\\
&\leq& (1 + \eps) \cdot \Delta(O_i \cup O_j, \{c_i, c_j\})
\end{eqnarray*}
This implies that the centers $\{c_1, ..., c_k\} \backslash \{c_j\}$ give a $(1 + \eps)$-approximation to the
$k$-median objective. This contradicts the assumption that $P$ is $(\eps, k)$-irreducible.
\end{proof}

The above lemma gives the following Corollary that we will use in the rest of the proof.

\begin{corollary}\label{lem:min-repeat}
For any $i \neq j$, $D(c_i, c_j) \geq (\beta \eps) \cdot (r_i + r_j)$.
\end{corollary}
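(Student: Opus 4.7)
The plan is to deduce the corollary directly from the preceding lemma together with the $\beta$-approximate symmetry of $D$. The preceding lemma provides a lower bound on $D(c_j,c_i)$ only in the specific direction where $i<j$ (so that $m_i\ge m_j$ can be invoked in the contradiction argument). The corollary, however, needs a bound on $D(c_i,c_j)$ for \emph{any} ordered pair with $i\ne j$, and since $D$ need not be symmetric, this is exactly the gap the factor of $\beta$ will fill.

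I would split into two cases. First, if $i>j$, then $(j,i)$ plays the role of the ordered pair in the lemma, so the lemma yields $D(c_i,c_j)\ge \eps(r_j+r_i)$ directly; since $\beta\le 1$, this is at least $\beta\eps(r_i+r_j)$. Second, if $i<j$, the lemma only gives $D(c_j,c_i)\ge \eps(r_i+r_j)$, which is a bound in the ``wrong'' direction. I would then apply $\beta$-approximate symmetry in the form $D(c_i,c_j)\ge \beta\cdot D(c_j,c_i)$ to conclude $D(c_i,c_j)\ge \beta\eps(r_i+r_j)$. Combining the two cases finishes the proof.

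There is no substantive obstacle here; the only thing to be careful about is making sure the symmetry inequality is applied in the correct direction ($\beta D(q,p)\le D(p,q)$, not the other way around) and that $\beta\le 1$ is used in the ``easy'' case to present a uniform bound. This is essentially a one-line deduction: combine the lemma with one invocation of $\beta$-approximate symmetry.
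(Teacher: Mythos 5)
Your proof is correct and matches the paper's argument exactly: the paper likewise splits into the cases $i>j$ (apply the lemma directly, using $\beta\le 1$) and $i<j$ (apply the lemma to get $D(c_j,c_i)\ge\eps(r_i+r_j)$, then use $D(c_i,c_j)\ge\beta D(c_j,c_i)$).
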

\begin{proof}
If $i > j$, then we have $D(c_i, c_j) \geq \eps \cdot (r_i + r_j)$ from the above lemma and hence $D(c_i, c_j) \geq (\beta \eps) \cdot (r_i + r_j)$. In case $i < j$, then the above lemma gives $D(c_j, c_i) \geq \eps \cdot (r_i + r_j)$. Using $\beta$-approximate symmetry property we get the statement of the corollary.
\end{proof}

We give an outline of the proof. Suppose in the first $(i-1)$ iterations, we have found centers which are close to the centers of some $(i-1)$ clusters in the optimal solution. Conditioned on this fact, we show that in the next iteration, we are likely to sample enough number of points from one of the remaining clusters (c.f. Corollary~\ref{cor:sample-repeat}). Further, we show that the samples from this new cluster are close to uniform distribution (c.f. Lemma~\ref{lem:key-repeat}). Since such a sample does not come from exactly uniform distribution, we cannot use the $(f, \gamma, \delta)$-sampling property directly. 
In fact, dealing with the slight non-uniformity turns out to be non-trivial (c.f. Lemmas~\ref{lem:clinaba-repeat} and ~\ref{lem:final-repeat}). 

We now show that the following invariant will hold for all iterations  : let $\C{i-1}$ consist of centers
$c_1', \ldots, c_{i-1}'$ (added in this order). Then, with probability at least $\frac{1}{2^{i}}$, there exist distinct
indices $j_1, \ldots, j_{i-1}$ such that
for all $l = 1, \ldots, i-1$,
\begin{equation}
\label{eq:inv-repeat}
 \Delta(O_{j_l},c_l') \leq (1 + \eps/\eta) \cdot \Delta(O_{j_l},c_{j_l})
\end{equation}
Where $\eta$ is a fixed constant that depends on $\alpha$ and $\beta$. With foresight, we fix the value of $\eta = \frac{2 \alpha^2}{\beta^2} \cdot (1 + 1/\beta)$.
Suppose this invariant holds for $\C{i-1}$ (the base case is easy since $\C{0}$ is empty).
We now show that this invariant holds for $\C{i}$ as well. In other words, we just show that in the $i^{th}$ iteration,
with probability at least $1/2$,  the
algorithm finds a center $c_{i}'$  such that
$$ \Delta(O_{j_i},c_i') \leq (1 + \eps/\eta) \cdot \Delta(O_{j_i},c_{j_i}), $$ where $j_i$ is an index distinct from
$\{j_1, \ldots, j_{i-1}\}$.
This will basically show that at the end of the last iteration, we will have $k$ centers that give a $(1 + \eps)$-approximation
with probability at least $2^{- k}$.

We now show that the invariant holds for $\C{i}$. We use the notation developed above for $\C{i-1}$. Let $I$ denote
the set of indices $\{j_1, \ldots, j_{i-1}\}$. Now let $j_i$ be the index $j \notin I$ for which $\Delta(O_j, \C{i-1})$
is maximum. Intuitively, conditioned on sampling from clusters in $O_{i}, \cdots, O_{k}$ using $D^2$-sampling, it is  likely that
enough points from $O_{j_i}$ will be sampled.
The next lemma shows that there is good chance that elements from the sets $O_j$ for $j \notin I$
will be sampled.

\begin{lemma}
\label{lem:sampled-repeat}
 $$\frac{\sum_{l \notin I} \Delta(O_l, \C{i-1})}{\sum_{l=1}^k \Delta(O_l, \C{i-1})} \geq \eps/2.$$
\end{lemma}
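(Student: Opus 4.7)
The plan is to mirror the original proof of Lemma~\ref{lem:sampled} almost verbatim, substituting $\eps/\eta$ for $\eps/20$ wherever the inductive invariant is invoked. The approach is contradiction: suppose the ratio in the statement is strictly less than $\eps/2$; I would then show that $\C{i-1}$, a set of at most $i-1 \leq k-1$ centers, already attains a $(1+\eps)$-approximation of $\Delta_k(P)$, contradicting $(k,\eps)$-irreducibility (which forces $\Delta_{k-1}(P) \geq (1+\eps)\Delta_k(P)$).

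Concretely, I would carry out three short steps. First, split $\Delta(P, \C{i-1}) = \sum_{l \in I}\Delta(O_l,\C{i-1}) + \sum_{l \notin I}\Delta(O_l,\C{i-1})$ and use the contradiction hypothesis to absorb the second sum into $\frac{1}{1-\eps/2}$ times the first. Second, invoke the invariant~(\ref{eq:inv-repeat}) on each term in the remaining sum to replace $\Delta(O_l, \C{i-1})$ by $(1+\eps/\eta)\Delta_1(O_l)$. Third, upper-bound $\sum_{l \in I}\Delta_1(O_l)$ by $\sum_{l \in [k]}\Delta_1(O_l) = \Delta_k(P)$, reaching a bound of the shape $\frac{1+\eps/\eta}{1-\eps/2}\,\Delta_k(P)$.

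The only real obstacle is verifying the arithmetic of the leading constant: we need $\frac{1+\eps/\eta}{1-\eps/2} \leq 1+\eps$, equivalently $1/\eta \leq 1/2 - \eps/2$. Since $\alpha \geq 1$ (any meaningful approximate triangle inequality requires this) and $0 < \beta \leq 1$, the choice $\eta = \frac{2\alpha^2}{\beta^2}(1+1/\beta)$ forces $\eta \geq 4$, so for $\eps \leq 1/2$ we indeed have $1/\eta \leq 1/4 \leq 1/2 - \eps/2$. Once this inequality is in hand, the chain collapses to $\Delta(P,\C{i-1}) \leq (1+\eps)\Delta_k(P)$, and since $|\C{i-1}| \leq k-1$ this violates irreducibility. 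Notably, none of the four special properties of $D$ (symmetry, approximate triangle inequality, centroid, sampling) is actually needed in this proof --- the heavy lifting is done entirely by the invariant and by irreducibility --- so this is the easiest of the lemmas in the generalized analysis to transcribe from Section~\ref{sec:algo}.
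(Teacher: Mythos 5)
Your proof is correct and follows the paper's argument essentially verbatim: the same split of $\Delta(P,\C{i-1})$, the same absorption of the $\sum_{l\notin I}$ term via the contradiction hypothesis, the same invocation of invariant~(\ref{eq:inv-repeat}), and the same final appeal to irreducibility. The only difference is that you explicitly verify the arithmetic step $\frac{1+\eps/\eta}{1-\eps/2}\le 1+\eps$ (via $1/\eta\le 1/2-\eps/2$, using $\eta\ge 4$ and $\eps\le 1/2$), which the paper merely asserts inline; this is a welcome clarification, not a divergence in approach.
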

\begin{proof}
Suppose, for the sake of contradiction, the above statement does not hold. Then,
\begin{eqnarray*}
\Delta(P, \C{i-1}) &=& \sum_{l \in I}  \Delta(O_l, \C{i-1}) +  \sum_{l \notin I} \Delta(O_l, \C{i-1})  \\
& < &  \sum_{l \in I} \Delta(O_l, \C{i-1})  +  \frac{\eps/2}{1 - \eps/2} \cdot \sum_{l \in I}  \Delta(O_l, \C{i-1})
\quad \textrm{(by our assumption)}\\
&=& \frac{1}{1 - \eps/2} \cdot \sum_{l \in I} \Delta(O_l, \C{i-1}) \\
&\leq& \frac{1 + \eps/\eta}{1 - \eps/2} \cdot \sum_{l \in I}  \Delta_1(O_l) \quad \textrm{(using the invariant for $\C{i-1}$)}\\
&\leq& (1+\eps) \cdot  \sum_{l \in I}  \Delta_1(O_l) \quad \textrm{(using  $\eta = (2\alpha^2/\beta^2)\cdot (1 + 1/\beta) \geq 4$)}\\
&\leq& (1+\eps) \cdot  \sum_{l \in [k]}  \Delta_1(O_l)
\end{eqnarray*}

But this contradicts the fact that $P$ is $(k, \eps)$-irreducible.
\end{proof}

\noindent
We get the following corollary easily.
\begin{corollary}
\label{cor:sample-repeat}
$$ \frac{\Delta(O_{j_i}, \C{i-1})}{\sum_{l=1}^k  \Delta(O_l, \C{i-1})} \geq \frac{\eps}{2k}.$$
\end{corollary}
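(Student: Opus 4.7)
The plan is to derive this corollary as an immediate averaging consequence of Lemma~\ref{lem:sampled-repeat}. The key observation is that $j_i$ was defined as the index $j \notin I$ that maximizes $\Delta(O_j, \C{i-1})$, so $\Delta(O_{j_i}, \C{i-1})$ dominates the average of $\Delta(O_l, \C{i-1})$ over $l \notin I$.

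Concretely, I would proceed in three short steps. First, since there are at most $k$ indices in $\{1,\dots,k\} \setminus I$, the maximum-over-average inequality gives
\[
\Delta(O_{j_i}, \C{i-1}) \;\geq\; \frac{1}{|\{1,\dots,k\}\setminus I|} \sum_{l \notin I} \Delta(O_l, \C{i-1}) \;\geq\; \frac{1}{k} \sum_{l \notin I} \Delta(O_l, \C{i-1}).
\]
Second, I would invoke Lemma~\ref{lem:sampled-repeat}, which states $\sum_{l \notin I} \Delta(O_l, \C{i-1}) \geq (\eps/2) \cdot \sum_{l=1}^{k} \Delta(O_l, \C{i-1})$. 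Third, I would divide both sides of the chained inequality by $\sum_{l=1}^{k} \Delta(O_l, \C{i-1})$, producing
\[
\frac{\Delta(O_{j_i}, \C{i-1})}{\sum_{l=1}^{k} \Delta(O_l, \C{i-1})} \;\geq\; \frac{1}{k} \cdot \frac{\eps}{2} \;=\; \frac{\eps}{2k}.
\]

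There is essentially no obstacle here: the result is a pigeonhole/averaging step on top of the previous lemma, and the only thing to be mildly careful about is that $|\{1,\dots,k\}\setminus I| = k - (i-1) \leq k$, so dividing by $k$ (rather than by $k - i + 1$) only weakens the bound. No properties of the distance measure $D$ (beyond non-negativity so that the ratio is well-defined whenever the denominator is positive) are needed for this step.
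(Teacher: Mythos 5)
Your proof is correct and is exactly the averaging argument the paper has in mind when it says the corollary follows ``easily'' from Lemma~\ref{lem:sampled-repeat}: $j_i$ maximizes $\Delta(O_j, \C{i-1})$ over $j \notin I$, there are at most $k$ such indices, and chaining the max-over-average bound with the lemma gives $\eps/2k$. Nothing to add.
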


The above Lemma and its Corollary say that with probability at least $\frac{\eps}{2k}$,
points in the set $O_{j_i}$ will be sampled. However the points within $O_{j_i}$ are not sampled uniformly.
Some points in $O_{j_i}$ might be sampled with higher probability than other points.
In the next lemma, we show that each point will be sampled with certain minimum probability.

\begin{lemma}
\label{lem:key-repeat}
For any $l \notin  I$ and any point $p \in O_l$, $$\frac{D(p, \C{i-1})}{\Delta(O_l, \C{i-1})} \geq \frac{1}{m_l} \cdot \frac{\eps}{3 \alpha \beta \eta}. $$
\end{lemma}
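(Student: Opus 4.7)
The plan is to reproduce the structure of the proof of Lemma~\ref{lem:key} while carefully propagating the constants introduced by $\alpha$-approximate triangle inequality and $\beta$-approximate symmetry. The strategic choice of $\eta = (2\alpha^2/\beta^2)(1+1/\beta)$ in the invariant (\ref{eq:inv-repeat}) is calibrated so that these constants cancel to yield the claimed ratio $\eps/(3\alpha\beta\eta\, m_l)$. To begin, I would fix $p \in O_l$, let $c_t' \in \C{i-1}$ denote its nearest center in $\C{i-1}$ (so $D(p,\C{i-1}) = D(p, c_t')$), and let $j_t \in I$ be the corresponding optimal index. Applying the centroid property to the invariant for $\C{i-1}$ gives the key handle $D(c_{j_t}, c_t') \leq (\eps/\eta)\,r_{j_t}$, which will be used repeatedly.

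For the denominator, I would use $\Delta(O_l, \C{i-1}) \leq \Delta(O_l, \{c_t'\}) = m_l r_l + m_l D(c_l, c_t')$ by the centroid property. Expanding $D(c_l, c_t')$ via $\alpha$-approximate triangle inequality through $c_{j_t}$, converting $D(c_l, c_{j_t})$ to $D(c_{j_t}, c_l)$ by $\beta$-approximate symmetry, and substituting the invariant bound, gives $D(c_l, c_t') \leq \alpha D(c_{j_t}, c_l)/\beta + \alpha(\eps/\eta)\, r_{j_t}$. Corollary~\ref{lem:min-repeat} then bounds both $r_l$ and $r_{j_t}$ by $D(c_{j_t}, c_l)/(\beta\eps)$, yielding an overall denominator bound of order $m_l\, D(c_{j_t}, c_l)/(\beta\eps)$.

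For the numerator, the crucial observation is that $p \in O_l$ forces $D(p, c_l) \leq D(p, c_{j_t})$, since $c_l$ is $p$'s assigned optimal center. Applying $\alpha$-approximate triangle inequality to $D(c_{j_t}, c_l)$ through $p$ produces $D(p, c_{j_t}) \geq D(c_{j_t}, c_l)/(2\alpha)$. A second application through $c_t'$, combined with $\beta$-symmetry on $D(c_t', c_{j_t})$ and the invariant, gives
\[
D(p, c_t') \;\geq\; \frac{D(c_{j_t}, c_l)}{2\alpha^2} - \frac{\eps\,r_{j_t}}{\beta\,\eta}.
\]
Invoking Corollary~\ref{lem:min-repeat} once more to replace $r_{j_t}$ by $D(c_{j_t}, c_l)/(\beta\eps)$ and plugging in the value of $\eta$, the subtracted term becomes strictly less than half the main term, yielding $D(p, c_t') \geq D(c_{j_t}, c_l)/(2\alpha^2(\beta+1))$. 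Taking the ratio of the numerator and denominator bounds, the $D(c_{j_t}, c_l)$ factors cancel, and routine simplification (using $\alpha \geq 1$ and $\eps \leq 1/2$ so that $1 + \alpha + \alpha/\eta \leq 2\alpha$) reduces to the claimed lower bound $\eps/(3\alpha\beta\eta\, m_l)$.

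The main obstacle I expect is bookkeeping the asymmetry between $D(x,y)$ and $D(y,x)$ throughout the chain: the invariant produces $D(c_{j_t}, c_t')$, but the triangle-inequality chains naturally generate $D(c_t', c_{j_t})$ and $D(c_l, c_{j_t})$, each of which costs a factor of $1/\beta$ to flip back. The specific value $\eta = 2\alpha^2(\beta+1)/\beta^3$ is engineered precisely to absorb these $1/\beta$ factors so that the perturbation term $\eps r_{j_t}/(\beta\eta)$ is strictly dominated by the main term $D(c_{j_t}, c_l)/(2\alpha^2)$, which is what allows a clean lower bound proportional to $D(c_{j_t}, c_l)$ to survive and cancel against the denominator.
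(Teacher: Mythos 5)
Your high-level plan matches the paper's proof exactly: same denominator bound via the centroid property, same two-step triangle-inequality chain for the numerator through $c_{j_t}$, same use of Corollary~\ref{lem:min-repeat} to replace $r_{j_t}$ and $r_l$, and the same observation that $\eta$ is calibrated so the perturbation term is at most half the main term. However, the intermediate constants you write down are not correct, and the errors are precisely in the $\beta$-bookkeeping you flagged as the main obstacle.

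Concretely: the step ``triangle inequality through $p$ produces $D(p,c_{j_t}) \geq D(c_{j_t},c_l)/(2\alpha)$'' does not hold for $\beta < 1$. Expanding $D(c_{j_t},c_l) \leq \alpha\bigl(D(c_{j_t},p) + D(p,c_l)\bigr)$, you must pay $D(c_{j_t},p) \leq (1/\beta) D(p,c_{j_t})$ by symmetry and $D(p,c_l) \leq D(p,c_{j_t})$ by optimality of $c_l$, which gives $D(p,c_{j_t}) \geq D(c_{j_t},c_l)/\bigl(\alpha(1+1/\beta)\bigr)$; since $1+1/\beta \geq 2$, your claimed $1/(2\alpha)$ lower bound is strictly stronger than what follows. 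Propagating this, the clean numerator bound should be $D(p,c_t') \geq D(\cdot,\cdot)/(\eta\beta^2)$, which is $\beta D(\cdot,\cdot)/\bigl(2\alpha^2(\beta+1)\bigr)$, a factor of $\beta$ smaller than your stated $D(\cdot,\cdot)/\bigl(2\alpha^2(\beta+1)\bigr)$. Finally, your closing simplification ``$1 + \alpha + \alpha/\eta \leq 2\alpha$'' is false at $\alpha=1$ (it reads $2.25 \leq 2$ when $\eta = 4$); the paper's analogous final step is $\frac{1}{\eta\beta^2}\cdot\frac{1}{1/(\beta\eps)+\alpha+\alpha/(\eta\beta)} \geq \frac{\eps}{3\eta\alpha\beta}$, which after clearing reduces to $3\alpha \geq 1 + \alpha\beta\eps + \alpha\eps/\eta$ and does hold under $\alpha\geq 1$, $\beta\leq 1$, $\eps\leq 1/2$. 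None of these are fatal to the strategy --- the paper's version of the same chain closes --- but as written your inequalities overclaim by factors of $\beta$ and your final arithmetic check fails, so the proof as stated does not verify the claimed constant $\eps/(3\alpha\beta\eta\, m_l)$.
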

\begin{proof}
Fix a point $p \in O_l$.
Let $j_t \in I$ be the
 index such that $p$ is closest to $c_{t}'$ among all centers in $\C{i-1}$.
We have
\begin{eqnarray}
\Delta(O_{l}, \C{i-1}) &\leq& m_{l} \cdot r_{l} + m_l \cdot D(c_l, c_t') \quad \textrm{(using Centroid property)}\nonumber \\
&\leq& m_{l} \cdot r_{l} + \alpha \cdot m_{l} \cdot \left(D(c_l, c_{j_t}) + D(c_{j_t}, c_t') \right) \quad \textrm{(Using  triangle inequality)} \nonumber \\
\label{eq:p1-repeat}
&\leq& m_l \cdot r_l + \alpha \cdot m_l \cdot \left( D(c_l, c_{j_t}) + \frac{ \eps r_{j_t}}{\eta} \right),
\end{eqnarray}
where the last inequality follows from the invariant condition for $\C{i-1}$.
Also, we know that the following inequalities hold:
\begin{equation}\label{u1}
\alpha \cdot (D(p, c_t') + D(c_t', c_{j_t})) \geq D(p, c_{j_t}) \quad \textrm{(from approximate triangle inequality)}
\end{equation}
\begin{equation}\label{u2}
\alpha \cdot (D(c_l, p) + D(p, c_{j_t})) \geq D(c_l, c_{j_t}) \quad \textrm{(from approximate triangle inequality)}
\end{equation}
\begin{equation}\label{u3}
D(p, c_l) \leq D(p, c_{j_t}) \quad \textrm{(since $p \in O_l$)}
\end{equation}
\begin{equation}\label{u4}
\beta \cdot D(c_l, p) \leq D(p, c_l) \leq (1/\beta) \cdot D(c_l, p) \quad \textrm{(from approximate symmetry)}
\end{equation}
\begin{equation}\label{u5}
D(c_{j_t}, c_t') \leq (\eps/\eta) \cdot r_{j_t} \quad \textrm{(from invariant condition)}
\end{equation}
\begin{equation}\label{u6}
\beta \cdot D(c_{j_t}, c_t') \leq D(c_t', c_{j_t}) \leq (1/\beta) \cdot D(c_{j_t}, c_t')  \quad \textrm{(from approximate symmetry)}
\end{equation}
Inequalities (\ref{u2}), (\ref{u3}), and (\ref{u4}) gives the following:
\begin{eqnarray}
&& D(p, c_{j_t})  + D(c_l, p) \geq \frac{D(c_l, c_{j_t})}{\alpha}  \nonumber \\
&& \Rightarrow D(p, c_{j_t})  + \frac{D(p, c_l)}{\beta} \geq \frac{D(c_l, c_{j_t})}{\alpha}\quad \textrm{(using (\ref{u4}))}\nonumber \\
&& \Rightarrow  D(p, c_{j_t}) + \frac{D(p, c_{j_t})}{\beta} \geq \frac{D(c_l, c_{j_t})}{\alpha} \quad \textrm{(using (\ref{u3}))}\nonumber \\
\label{u7}
&& \Rightarrow D(p, c_{j_t}) \geq \frac{D(c_l, c_{j_t})}{\alpha (1 + 1/\beta)} 
\end{eqnarray}
Using (\ref{u1}) and (\ref{u7}), we get the following:
\begin{equation*}
D(p, c_t') \geq \frac{D(c_l, c_{j_t})}{\alpha^2 (1 + 1/\beta)} - D(c_t', c_{j_t})
\end{equation*}
Using the previous inequality and (\ref{u6}) we get the following:
\begin{eqnarray}
D(p,c_t') & \geq & \frac{D(c_l, c_{j_t})}{\alpha^2 (1 + 1/\beta)} - \frac{D(c_{j_t}, c_t')}{\beta} \nonumber \\
& \geq & \frac{D(c_l, c_{j_t})}{\alpha^2 (1+1/\beta)} - \frac{\eps}{\eta \beta} \cdot r_{j_t} \quad \textrm{(using the invariant for $\C{i-1}$)} \nonumber \\
\label{eq:p2-repeat}
&\geq& \frac{D(c_l, c_{j_t})}{\eta \beta^2} \quad \textrm{(Using Corollary~\ref{lem:min-repeat})}
\end{eqnarray}
So, we get
\begin{eqnarray*}
\frac{D(p, \C{i-1})}{\Delta(O_l, \C{i-1})} & \geq& \frac{D(c_l, c_{j_t})}
{(\eta \beta^2) \cdot m_l \cdot \left( r_l + \alpha \left( D(c_l, c_{j_t}) + \frac{\eps r_t}{\eta} \right) \right)} \quad \textrm{(using (\ref{eq:p1-repeat}) and (\ref{eq:p2-repeat}))} \\
&\geq& \frac{1}{(\eta \beta^2) \cdot m_l} \cdot \frac{1}{1/(\beta \eps) + \alpha + 1/(\eta \beta)} \\
&\geq& \frac{\eps}{(3\eta \alpha \beta)} \cdot \frac{1}{m_l} \quad \textrm{(using Corollary~\ref{lem:min-repeat})}
\end{eqnarray*}
\end{proof}

Recall that $\itS{i}$ is the sample of size $N$ in this iteration. We would like to show that  that the invariant
will hold in this iteration as well. We first prove a simple corollary of Lemma~\ref{lem:inaba}.

\begin{lemma}
\label{lem:clinaba-repeat}
Let $Q$ be a set of $n$ points, and $\gamma$ be a parameter, $0 < \gamma < 1$. Define a random variable $X$ as follows :
with probability $\gamma$, it picks an element of $Q$ uniformly at random, and with  probability $1-\gamma$, it does not
pick any element (i.e., is null). Let $X_1, \ldots, X_\ell$ be $\ell$ independent copies of $X$, where $\ell = \frac{4}{\gamma} \cdot f(\eps/\eta, 0.2).$
Let $T$ denote the (multi-set) of elements of $Q$ picked by $X_1, \ldots, X_\ell$. Then, with probability at least $3/4$,
$T$ contains a subset $U$ of size $f(\eps/\eta, 0.2)$ which satsifies
\begin{eqnarray}
\label{eq:clinaba-repeat}
\Delta(P, m(U)) \leq \left(1 + \frac{\eps}{\eta} \right) \cdot \Delta_1(P)
\end{eqnarray}
\end{lemma}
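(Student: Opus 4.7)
The plan is to prove this generalized version of Lemma~\ref{lem:clinaba} by the same conditioning-and-coupling argument used there, with the $(f,\gamma,\delta)$-sampling property in place of Inaba's lemma. Let $m := f(\eps/\eta, 0.2)$, so that $\ell = 4m/\gamma$ and the target subset size is $m$.

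First I would define the random subset $I \subseteq \{1,\ldots,\ell\}$ consisting of those indices $t$ for which $X_t$ is not null. The key structural observation is that, conditioned on $I = \{t_1, \ldots, t_r\}$, the random variables $X_{t_1}, \ldots, X_{t_r}$ are mutually independent and each is uniform on $Q$. This reduces the problem to two sub-tasks: (i) show $|I| \geq m$ with good probability, and (ii) show that, conditioned on $|I|\ge m$, some $m$-subset of the sampled elements satisfies the centroid inequality with good probability.

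For (i), each $X_t$ is non-null independently with probability $\gamma$, so $|I|$ is binomial with mean $\gamma \ell = 4m$. A standard Chernoff (or even Markov applied to $\ell - |I|$) bound gives $\Pr[|I| < m] \leq \Pr[|I| < \E[|I|]/4] \leq 1/20$ (easily; the bound is exponentially small in $m$, which we may assume is at least a small constant). For (ii), I would condition on any particular realization of $I$ with $|I| \geq m$ and let $U$ be (say) the multiset formed by the first $m$ non-null samples. By the structural observation, $U$ is a uniform random sample of size $m = f(\eps/\eta, 0.2)$ drawn with replacement from $Q$, so the $(f, \eps/\eta, 0.2)$-sampling property directly gives
\[
\Pr\bigl[\,\Delta(Q, m(U)) \leq (1 + \eps/\eta)\,\Delta_1(Q)\,\bigr] \geq 1 - 0.2 = 0.8,
\]
conditioned on $|I|\ge m$.

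A union bound over the two possible failure events—$\{|I| < m\}$ and the sampling property failing—yields overall success probability at least $0.8 - 1/20 = 0.75 = 3/4$, as required, and the witnessing subset $U$ is contained in $T$. There is no real obstacle here; the only minor care needed is in choosing $U$ deterministically from $T$ (e.g.\ the first $m$ non-null picks in index order) so that conditional independence and uniformity of $U$ are preserved when invoking the sampling property. The extra factor of $4$ in $\ell = (4/\gamma) f(\eps/\eta, 0.2)$ (compared with a naive choice of $\ell = f/\gamma$) is exactly what buys enough slack to make the Chernoff failure probability negligible compared with the $0.2$ slack in the sampling property.
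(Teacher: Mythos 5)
Your proposal is correct and follows essentially the same route as the paper's proof: condition on the index set $I$ of non-null samples to obtain independent uniform samples from $Q$, bound $\Pr[|I| < f(\eps/\eta,0.2)]$ via concentration around the mean $4f(\eps/\eta,0.2)$, apply the $(f,\eps/\eta,0.2)$-sampling property, and combine the two failure probabilities. The only differences are cosmetic — you spell out the Chernoff/union-bound bookkeeping and the deterministic choice of the witnessing $m$-subset, which the paper leaves implicit.
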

\begin{proof}
 Define a random variable $I$, which is a subset of the index set $\{1, \ldots, \ell\}$, as follows
$I = \{ t : X_t \mbox{ picks an element of $Q$, i.e., it is not null} \}$. Conditioned on $I = \{t_1, \ldots, t_r\}$, note that the random variables $X_{t_1}, \ldots, X_{t_r}$ are independent uniform samples from $Q$. 
Thus if $|I| \geq f(\eps/\eta, 0.2)$, then sampling property wrt. $D$ implies that with 
probability at least 0.8, the desired event~(\ref{eq:clinaba-repeat}) happens. 
But the expected value of $|I|$ is $4 \cdot f(\eps/\eta, 0.2)$, and so, $|I| \geq f(\eps/\eta, 0.2)$ with high probability, and hence, the statement in the lemma is true.
\end{proof}

\noindent
We are now ready to prove the main lemma.
\begin{lemma}
\label{lem:final-repeat}
With probability at least $1/2$, there exists a subset $\itT{i}$ of $\itS{i}$ of size at most $f(\eps/\eta, 0.2)$ such that
$$ \Delta(O_{j_i}, m(\itT{i})) \leq \left(1 + \frac{\eps}{\eta}\right) \cdot \Delta_1(O_{j_i}). $$
\end{lemma}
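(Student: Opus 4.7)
The plan is to mirror the proof of Lemma~\ref{lem:final} (the squared-Euclidean case) with the generalized building blocks substituted in. Concretely, I would introduce $N$ independent random variables $Y_1, \ldots, Y_N$, where each $Y_t$ draws one $D^2$-sample with respect to $\C{i-1}$ and discards it (sets $Y_t$ to null) whenever the sampled point lies outside $O_{j_i}$. By Corollary~\ref{cor:sample-repeat}, the probability that a $D^2$-sample falls inside $O_{j_i}$ is at least $\eps/(2k)$, and by Lemma~\ref{lem:key-repeat}, once we condition on this event, each individual point $p \in O_{j_i}$ is chosen with probability at least $\frac{1}{m_{j_i}} \cdot \frac{\eps}{3\alpha\beta\eta}$. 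Multiplying these, $Y_t$ returns any fixed $p \in O_{j_i}$ with probability at least $\gamma/m_{j_i}$, where $\gamma := \frac{\eps^2}{6 k \alpha \beta \eta}$.

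Next I would run the standard coupling argument to reduce to the setting of Lemma~\ref{lem:clinaba-repeat}. For each $t$, if $Y_t$ is null set $X_t$ null; if $Y_t$ returns $p \in O_{j_i}$ with actual conditional probability $\gamma(p)/m_{j_i}$ (where $\gamma(p) \geq \gamma$), then set $X_t = p$ with probability $\gamma/\gamma(p)$ and null otherwise. A short calculation shows $X_t$ is null with probability $1-\gamma$ and uniform on $O_{j_i}$ with probability $\gamma$; the $X_t$ are independent across $t$; and the multiset of non-null values of $(Y_t)$ always contains the multiset of non-null values of $(X_t)$ as a subset.

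Our choice $N = \frac{24\,\eta\alpha\beta k \cdot f(\eps/\eta,0.2)}{\eps^2} = \frac{4}{\gamma}\cdot f(\eps/\eta,0.2)$ is exactly the hypothesis $\ell$ of Lemma~\ref{lem:clinaba-repeat} with $Q = O_{j_i}$. That lemma therefore produces, with probability at least $3/4$, a subset $U$ of size $f(\eps/\eta,0.2)$ among the $X_t$'s satisfying $\Delta(O_{j_i}, m(U)) \leq (1 + \eps/\eta)\,\Delta_1(O_{j_i})$. By the coupling, $U$ is a subset of the points actually sampled by $(Y_t)$, and hence of $\itS{i} \cap O_{j_i} \subseteq \itS{i}$, so we set $\itT{i} := U$.

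The only real obstacle is the non-uniformity of $D^2$-sampling inside $O_{j_i}$, since the $(f,\gamma,\delta)$-sampling property from the hypotheses of Theorem~\ref{thm:other} is stated only for truly uniform samples. This is exactly what the coupling neutralizes: it converts the non-uniform $Y_t$ draws into uniform $X_t$ draws (at the price of a further Bernoulli thinning by $\gamma$), which is then absorbed by choosing $N$ a factor of $4/\gamma$ larger than $f(\eps/\eta,0.2)$ so that $\mathbb{E}[|I|] = 4\,f(\eps/\eta,0.2)$ and a Markov/Chernoff bound guarantees $|I| \geq f(\eps/\eta,0.2)$ with the slack needed to reach the $1/2$ success probability in the final statement. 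Everything else is bookkeeping identical to the squared-Euclidean proof.
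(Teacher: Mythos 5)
Your proposal mirrors the paper's proof essentially line for line: the same definition of the thinned variables $Y_t$, the same $\gamma = \eps^2/(6k\alpha\beta\eta)$ obtained by combining Corollary~\ref{cor:sample-repeat} with Lemma~\ref{lem:key-repeat}, the same coupling construction of $X_t$, and the same invocation of Lemma~\ref{lem:clinaba-repeat} with $N = (4/\gamma)\cdot f(\eps/\eta, 0.2)$. The argument is correct and matches the paper's approach.
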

\begin{proof}
Recall that $\itS{i}$ contains $N = \frac{(24 \eta \alpha \beta k) \cdot f(\eps/\eta, 0.2)}{\eps^2}$ independent samples of $P$ (using $D^2$-sampling). We are interested in $\itS{i} \cap O_{j_i}$.
Let $Y_1, \ldots, Y_N$ be  $N$ independent random variables defined as follows : for any $t$, $1 \leq t \leq N$, $Y_t$ picks an element of $P$ using $D^2$-sampling with respect to $\C{i-1}$. 
If this element is not in $O_{j_i}$, it just discards it (i.e., $Y_t$ is null). 
Let $\gamma$ denote $\frac{\eps^2}{6 \eta \alpha \beta k}$. Corollary~\ref{cor:sample-repeat} and Lemma~\ref{lem:key-repeat} imply that $Y_t$ picks a particular element of $O_{j_i}$ with probability at least $\frac{\gamma}{m_{j_i}}$. 
We would now like to apply Lemma~\ref{lem:clinaba-repeat} (observe that $N = \frac{4}{\gamma} \cdot f(\eps/\eta, 0.2)$). 
We can do this by a simple coupling argument as follows.
For a particular element $p \in O_{j_i}$, suppose $Y_t$ assigns probability $\frac{\gamma(p)}{m_{j_i}}$ to it.
One way of sampling a random variable $X_t$ as in Lemma~\ref{lem:clinaba-repeat} is as follows --  first sample using $Y_t$. If $Y_t$ is null, then $X_t$ is also null. Otherwise, suppose $Y_t$ picks an element $p$ of $O_{j_i}$. 
Then $X_t$ is equal to $p$ with probability $\frac{\gamma}{\gamma(p)}$, and null otherwise. 
It is easy to check that with probability $\gamma$, $X_t$ is a uniform sample from $O_{j_i}$, and null with probability $1-\gamma$. 
Now, observe that the set of elements of $O_{j_i}$ sampled by $Y_1, \ldots, Y_N$ is always a superset of $X_1, \ldots, X_N$. 
We can now use Lemma~\ref{lem:clinaba-repeat} to finish the proof.
\end{proof}

 Thus, we will take the index $s_i$ in Step 2(i) as the index of the set $\itT{i}$ as guaranteed by the Lemma above.
Finally, by repeating the entire process $2^k$ times, we make sure that we get a $(1+\eps)$-approximate solution
with high probability.
Note that the total running time of our algorithm is $\left( nd \cdot 2^k \cdot 2^{\tilde{O}(k \cdot f(\eps/\eta, 0.2))} \right)$.

\noindent
{\bf Removing the $(k,\eps)$-irreducibility assumption :} We now show how to remove this assumption. First note
that we have shown the following result.

\begin{theorem}
If a given point set $(k, \frac{\eps}{(1+\eps/2) \cdot k})$-irreducible, then
there is an algorithm that gives a $(1 + \frac{\eps}{(1+\eps/2) \cdot k})$-approximation to the $k$-median objective
with respect to distance measure $D$ and that runs in time $O(nd \cdot 2^{\tilde{O}(k \cdot f(\eps/k\eta, 0.2))})$.
\end{theorem}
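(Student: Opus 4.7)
The plan is a direct substitution argument that mirrors the remark in Section~\ref{sec:algo}: set $\eps' := \frac{\eps}{(1+\eps/2)k}$ and re-run the analysis of Lemmas~\ref{lem:sampled-repeat}--\ref{lem:final-repeat} with every occurrence of the parameter ``$\eps$'' replaced by $\eps'$. The algorithm is the one in Figure~\ref{fig:k-repeat}, now instantiated with $M = f(\eps'/\eta,0.2)$ and $N = (24\eta\alpha\beta k)\cdot f(\eps'/\eta,0.2)/(\eps')^2$. The four hypotheses on $D$ enter the analysis only through those lemmas, so no new property of $D$ needs to be introduced.

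First I would verify by induction on the outer-loop index $i$ the invariant analogous to~(\ref{eq:inv-repeat}): there exist distinct indices $j_1,\ldots,j_i$ with $\Delta(O_{j_l}, c_l') \le (1+\eps'/\eta)\Delta_1(O_{j_l})$ for all $l \le i$, holding with probability at least $2^{-i}$. The base case $i=0$ is vacuous, and the inductive step is the content of Lemma~\ref{lem:final-repeat} applied with $\eps'$, which itself invokes Lemma~\ref{lem:sampled-repeat}, Corollary~\ref{cor:sample-repeat}, Lemma~\ref{lem:key-repeat}, and Lemma~\ref{lem:clinaba-repeat}. The hypothesis ``$P$ is $(k,\eps')$-irreducible'' is exactly what Lemma~\ref{lem:sampled-repeat} requires after the substitution, and the $2^k$ outer repetitions of Figure~\ref{fig:k-repeat} boost the $2^{-k}$ success probability to a constant.

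Summing the per-cluster invariant over $l$ and using that $(j_1,\ldots,j_k)$ is a permutation of $(1,\ldots,k)$ then yields
\[
\Delta(P, \C{k}) \le \sum_{l=1}^k (1+\eps'/\eta)\,\Delta_1(O_{j_l}) \le (1+\eps')\,\Delta_k(P),
\]
using $\eta\ge 1$. This is the claimed $(1+\eps')$-approximation. The running-time count is likewise routine: the inner loop executes at most $2^k\cdot\binom{N}{M}^k = 2^{\tilde O(k\cdot f(\eps/k\eta,0.2))}$ times, and each iteration costs $O(nd)$ to draw a $D^2$-sample and to maintain the cost.

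The only non-mechanical part of the substitution is confirming that the constants in Lemma~\ref{lem:key-repeat} and Lemma~\ref{lem:clinaba-repeat} still line up with the declared $N$: one must check that $\gamma := (\eps')^2/(6\eta\alpha\beta k)$ remains a valid effective uniform sampling probability into $O_{j_i}$, and that $N = (4/\gamma)\cdot f(\eps'/\eta,0.2)$. Both facts follow by combining Corollary~\ref{cor:sample-repeat} (which contributes a factor $\eps'/(2k)$) and Lemma~\ref{lem:key-repeat} (which contributes $\eps'/(3\eta\alpha\beta)$) exactly as in the proof of Lemma~\ref{lem:final-repeat}; no new inequality is needed, and I do not anticipate any genuine obstacle beyond this bookkeeping.
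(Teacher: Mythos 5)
Your proposal is correct and follows exactly the approach the paper takes (the paper's own proof is the one-line remark that one should substitute $\eps \mapsto \frac{\eps}{(1+\eps/2)k}$ throughout the preceding analysis). You have simply spelled out the bookkeeping that the one-liner leaves implicit — the re-instantiated parameters $M$ and $N$, the induction on the invariant, the final summation using $\eta\ge 1$, and the running-time count — and all of it checks out.
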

\begin{proof}
The proof can be obtained by replacing $\eps$ by $\frac{\eps}{(1 + \eps/2) \cdot k}$ in the above analysis. \end{proof}

Suppose the point set $P$ is not $(k,\frac{\eps}{(1+\eps/2) \cdot k})$-irreducible.
In that case it will be sufficient to find fewer centers that $(1 + \eps)$-approximate the $k$-median objective.
The next lemma shows this more formally.

\begin{theorem}
There is an algorithm that runs in time $O(nd \cdot 2^{\tilde{O}(k \cdot f(\eps/\eta k, 0.2))})$ and
gives a $(1 + \eps)$-approximation to the $k$-median objective with respect to $D$.
\end{theorem}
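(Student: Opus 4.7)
The plan is to mirror, essentially verbatim, the argument given at the end of Section~\ref{sec:algo} for removing the irreducibility assumption in the Euclidean setting, replacing $d(\cdot,\cdot)^2$ by $D(\cdot,\cdot)$ throughout and invoking the preceding (irreducible-case) theorem for general $D$ instead of the Euclidean one. The preceding theorem already gives a $(1+\eps')$-approximation in time $O(nd \cdot 2^{\tilde{O}(k \cdot f(\eps'/k\eta, 0.2))})$ whenever the input is $(k, \eps')$-irreducible, for $\eps' = \eps/((1+\eps/2)k)$. The remaining task is to handle inputs that are not $(k, \eps')$-irreducible by reducing to a smaller value of $k$.

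First I would set $\eps' = \frac{\eps}{(1+\eps/2)k}$ and find the largest index $i$ with $1 < i \leq k$ such that $P$ is $(i, \eps')$-irreducible. If no such $i$ exists, then for every $2 \leq j \leq k$ the inequality $\Delta_{j-1}(P) < (1+\eps') \Delta_j(P)$ holds, so chaining yields
\[
\Delta_1(P) \;<\; (1+\eps')^{k-1} \Delta_k(P) \;\leq\; (1+\eps) \Delta_k(P),
\]
where the last step follows from the standard estimate $(1+\eps')^{k} \leq 1 + \eps$ for the chosen $\eps'$. In this case the output is the single centroid $m(P)$ (padded with arbitrary points to reach $k$ centers if desired); by the Centroid property, $\sum_{p \in P} D(p, m(P)) = \Delta_1(P)$, so this is already a $(1+\eps)$-approximation, computable in $O(nd)$ time.

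If such an $i$ does exist, I apply the previous theorem to $P$ as an $i$-median instance with parameter $\eps'$, obtaining a $(1+\eps')$-approximate solution $C$ with $|C| = i$, in time $O(nd \cdot 2^{\tilde{O}(i \cdot f(\eps'/i\eta, 0.2))}) \leq O(nd \cdot 2^{\tilde{O}(k \cdot f(\eps/\eta k, 0.2))})$. By maximality of $i$, for every $i < j \leq k$ the instance fails to be $(j, \eps')$-irreducible, hence $\Delta_{j-1}(P) < (1+\eps') \Delta_j(P)$; chaining again,
\[
\Delta_i(P) \;\leq\; (1+\eps')^{k-i}\, \Delta_k(P) \;\leq\; (1+\eps)\, \Delta_k(P).
\]
Combining, $\Delta(P, C) \leq (1+\eps') \Delta_i(P) \leq (1+\eps')(1+\eps) \Delta_k(P)$, which is within $(1 + O(\eps))$ of optimum; rescaling $\eps$ by a constant at the outset makes this exactly $(1+\eps)$. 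Finally, we output $C$ padded to $k$ centers (extra centers can only decrease the cost).

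There is really no hard step here — the delicate work was absorbed in the irreducible-case theorem and in Lemma~\ref{lem:final-repeat}. The one mild thing to verify is that the chaining argument, which in the Euclidean case relied only on the definition of $(k, \gamma)$-irreducibility, carries over unchanged for general $D$ (it does, since the definition is distance-agnostic), and that the single-centroid fallback uses only the Centroid property, not any Euclidean-specific fact.
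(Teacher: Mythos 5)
Your argument mirrors the paper's proof essentially verbatim: find the largest $i$ with $(i,\eps')$-irreducibility, reduce to the $i$-median instance, and chain the reducibility inequalities; the fallback case likewise uses only the Centroid property, exactly as the paper does. One small point: the ``rescaling'' you invoke at the end is unnecessary, since the cost is bounded by $(1+\eps')\Delta_i(P)\le (1+\eps')^{\,k-i+1}\Delta_k(P)\le(1+\eps')^{\,k}\Delta_k(P)\le(1+\eps)\Delta_k(P)$ directly (using $i\ge 1$), which is the tighter accounting the paper implicitly relies on.
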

\begin{proof}
Let $P$ denote the set of points.  Let $1 < j \leq k$ be the largest index such that $P$ is $(i, \frac{\eps}{(1+\eps/2) \cdot k})$-irreducible.
If no such $i$ exists, then $$\Delta_1(P) \leq \left(1+\frac{\eps}{(1+\eps/2)\cdot k} \right)^k \cdot \Delta_k(P) \leq (1+\eps) \cdot \Delta_k(P), $$
and so picking the centroid of $P$ will give a $(1+\eps)$-approximation.

Suppose such an $i$ exists. In that case, we consider the $i$-median problem and from the previous lemma
we get that there is an algorithm that runs in time $O(nd \cdot 2^i \cdot 2^{\tilde{O}(i \cdot f(\eps/\eta k, 0.2))})$ 
and gives a
$(1 + \frac{\eps}{(1+\eps/2) \cdot k})$-approximation to the $i$-median objective.
Now we have that $$\Delta_i \leq \left(1 + \frac{\eps}{(1+\eps/2) \cdot k} \right)^{k-i} \cdot \Delta_k \leq (1+\eps) \cdot \Delta_k.$$
Thus, we are done.
\end{proof}

\section{Proof of Lemma~\ref{lemma:mu-similar}}\label{appendix:B}

Here we give the proof of Lemma~\ref{lemma:mu-similar}. For better readability, we first restate the Lemma.
\begin{lm}[Restatement of Lemma~\ref{lemma:mu-similar}]
Let $0 < \mu \leq 1$. Any $\mu$-similar Bregman divergence satisfies the $\mu$-approximate symmetry property and $(2/\mu)$-approximate triangle inequality.
\end{lm}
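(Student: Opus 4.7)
The plan is to prove both properties by sandwiching $D_\phi$ between $\mu \cdot D_U$ and $D_U$ and transferring the corresponding (exact or approximate) property of the Mahalanobis distance $D_U$. So in each case I pay one factor of $1/\mu$ to translate from $D_U$ back to $D_\phi$.

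For $\mu$-approximate symmetry, I will exploit the fact that the Mahalanobis distance $D_U(p,q) = (p-q)^T U (p-q)$ is exactly symmetric: $D_U(p,q) = D_U(q,p)$. Then the chain
\[
D_\phi(p,q) \;\leq\; D_U(p,q) \;=\; D_U(q,p) \;\leq\; \frac{1}{\mu} \cdot D_\phi(q,p)
\]
from~(\ref{eqn:similar}) gives $\mu \cdot D_\phi(p,q) \leq D_\phi(q,p)$, and the other side $D_\phi(p,q) \leq (1/\mu) D_\phi(q,p)$ follows by swapping $p$ and $q$ in exactly the same argument.

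For $(2/\mu)$-approximate triangle inequality, my first step is to observe that $D_U$ itself satisfies a $2$-approximate triangle inequality. This holds because $U$ is symmetric positive definite, so we can Cholesky-factor $U = L^T L$ and write $D_U(p,q) = \|L(p-q)\|_2^2$; applying Lemma~\ref{lemma:3} to the vectors $Lp, Lq, Lr$ then yields
\[
D_U(p,q) \;\leq\; 2 \cdot \bigl(D_U(p,r) + D_U(r,q)\bigr).
\]
With this in hand, chaining once more through~(\ref{eqn:similar}):
\[
D_\phi(p,q) \;\leq\; D_U(p,q) \;\leq\; 2\bigl(D_U(p,r) + D_U(r,q)\bigr) \;\leq\; \frac{2}{\mu}\bigl(D_\phi(p,r) + D_\phi(r,q)\bigr),
\]
which is exactly the $(2/\mu)$-approximate triangle inequality for $D_\phi$.

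There is really no serious obstacle here — the content is entirely in the sandwich bound from $\mu$-similarity. The only mildly non-routine step is noticing that the $2$-approximate triangle inequality for $D_U$ is inherited from Lemma~\ref{lemma:3} via the linear change of variables $x \mapsto Lx$, which is why a squared-Mahalanobis form behaves like squared Euclidean distance for this purpose.
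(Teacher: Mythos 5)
Your proof is correct and follows essentially the same sandwich argument as the paper: use $\mu \cdot D_U \le D_\phi \le D_U$ to transfer the exact symmetry and the $2$-approximate triangle inequality of $D_U$ over to $D_\phi$, paying one factor of $1/\mu$ each time. The only (harmless) addition is that you derive the $2$-approximate triangle inequality for $D_U$ explicitly via the Cholesky factorization $U = L^T L$ and Lemma~\ref{lemma:3}, whereas the paper cites it as a known property of Mahalanobis distance.
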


The above lemma follows from the next lwo sub-lemmas.

\begin{lemma}[Symmetry for $\mu$-similar Bregman divergence]
Let $ 0 < \mu \leq 1$. Consider a $\mu$-similar Bregman divergence $D_{\phi}$ on domain $\mathbb{X} \subseteq \mathbb{R}^d$. For any two points $p, q \in \mathbb{X}$, we have:
$
\mu \cdot D_{\phi}(q, p) \leq D_{\phi}(p, q) \leq \frac{1}{\mu} \cdot D_{\phi}(q, p)
$
\end{lemma}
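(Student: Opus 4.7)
The plan is to chain the $\mu$-similarity inequalities against themselves, exploiting the fact that the Mahalanobis reference distance $D_U$ is exactly symmetric. Concretely, from the definition we have the two-sided bound $\mu \cdot D_U(p,q) \leq D_\phi(p,q) \leq D_U(p,q)$ for every pair $p,q \in \mathbb{X}$, and the identity $D_U(p,q) = (p-q)^T U (p-q) = (q-p)^T U (q-p) = D_U(q,p)$ since $U$ is symmetric.

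First, I would prove the left inequality $\mu \cdot D_\phi(q,p) \leq D_\phi(p,q)$ by the computation
\[
D_\phi(p,q) \;\geq\; \mu \cdot D_U(p,q) \;=\; \mu \cdot D_U(q,p) \;\geq\; \mu \cdot D_\phi(q,p),
\]
where the first step uses the lower bound of $\mu$-similarity applied to the pair $(p,q)$, the middle step uses symmetry of $D_U$, and the last step uses the upper bound of $\mu$-similarity applied to the pair $(q,p)$.

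The right inequality is then obtained simply by swapping the roles of $p$ and $q$ in what was just proved: that gives $\mu \cdot D_\phi(p,q) \leq D_\phi(q,p)$, and dividing by $\mu > 0$ yields $D_\phi(p,q) \leq (1/\mu) \cdot D_\phi(q,p)$. Combined, the two inequalities are exactly the statement of the lemma, and they also say $D_\phi$ satisfies $\mu$-approximate symmetry in the sense defined in the paper.

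There is no real obstacle here; the only thing to be careful about is not to accidentally use $\mu$-similarity in the wrong direction in the chain. The same reference matrix $U$ must be used for both the $(p,q)$ and $(q,p)$ invocations, but this is automatic since $\mu$-similarity fixes one such $U$ for the divergence. The analogous proof for the $(2/\mu)$-approximate triangle inequality (the other half of the restated outer lemma) would proceed by invoking the triangle inequality for $D_U$ (which holds with constant $2$ after squaring, as used earlier in the paper) and sandwiching between the two $\mu$-similarity bounds in the same style.
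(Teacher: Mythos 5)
Your proof is correct and uses the same idea as the paper: sandwich $D_\phi$ between $\mu D_U$ and $D_U$, and exploit the exact symmetry of the Mahalanobis form $D_U$. The paper simply writes the two inequalities as one six-term chain, whereas you prove the left inequality and then obtain the right by swapping $p$ and $q$; the underlying steps are identical.
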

\begin{proof}
Using equation(\ref{eqn:similar}) we get the following:
\[
\mathbf{\mu \cdot D_{\phi}(q, p)} \leq \mu \cdot D_{U}(q, p) = \mu \cdot D_{U}(p, q) \leq  \mathbf{D_{\phi}(p, q)} \leq D_{U}(p, q) = D_{U}(q, p) \leq \mathbf{\frac{1}{\mu} \cdot D_{\phi}(q, p)}.
\]
\end{proof}

\begin{lemma}[Triangle inequality for $\mu$-similar Bregman divergence]
Let $ 0 < \mu \leq 1$. Consider a $\mu$-similar Bregman divergence $D_{\phi}$ on domain $\mathbb{X} \subseteq \mathbb{R}^d$. For any three points $p, q, r \in \mathbb{X}$, we have:
$
(\mu/2) \cdot D_{\phi}(p, r) \leq D_{\phi}(p, q) + D_{\phi}(q, r) 
$
\end{lemma}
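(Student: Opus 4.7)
The plan is to chain together the $\mu$-similarity sandwich inequality (\ref{eqn:similar}) with the fact that Mahalanobis distance itself already satisfies a $2$-approximate triangle inequality, which the paper has already invoked when verifying the properties of $D_U$ for Theorem~\ref{thm:other}.

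First, I would start from the left-hand side $D_\phi(p,r)$ and use the upper half of (\ref{eqn:similar}) to pass to the associated Mahalanobis distance:
$$D_\phi(p,r) \;\leq\; D_U(p,r).$$
Next, I would apply the $2$-approximate triangle inequality that Mahalanobis distance is known to satisfy (this is exactly the property listed under ``Triangle inequality'' for Mahalanobis distance earlier in the paper, with $\alpha = 2$, and it follows from Lemma~\ref{lemma:3} applied to the Euclidean metric obtained by factoring $U = A^T A$). This gives
$$D_U(p,r) \;\leq\; 2\bigl(D_U(p,q) + D_U(q,r)\bigr).$$
Finally, I would apply the lower half of (\ref{eqn:similar}), in the rearranged form $D_U(x,y) \leq \tfrac{1}{\mu} D_\phi(x,y)$, to each of the two terms on the right. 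Combining,
$$D_\phi(p,r) \;\leq\; \tfrac{2}{\mu}\bigl(D_\phi(p,q) + D_\phi(q,r)\bigr),$$
and multiplying both sides by $\mu/2$ yields the claimed inequality.

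There is no real obstacle here: the only things being used are (i) the $\mu$-similarity bound sandwiching $D_\phi$ between $\mu D_U$ and $D_U$, and (ii) the fact that $D_U$, being a squared (semi-)norm, already obeys $D_U(p,r) \leq 2(D_U(p,q)+D_U(q,r))$. The only minor point worth noting is that $D_U$ is symmetric, so we do not need to worry about the order of arguments when invoking the triangle inequality for it. The same calculation, incidentally, is exactly what gives the $\alpha = 2/\mu$ appearing in the statement of Lemma~\ref{lemma:mu-similar}.
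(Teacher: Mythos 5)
Your argument is correct and is essentially the paper's proof read in the reverse direction: you start from $D_\phi(p,r)$ and pass through $D_U$ using the upper half of the sandwich, the $2$-approximate triangle inequality for Mahalanobis distance, and then the lower half, whereas the paper begins with $D_\phi(p,q)+D_\phi(q,r)$ and applies the same three facts in the opposite order. The ingredients and the resulting constant $2/\mu$ are identical.
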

\begin{proof}
We have:
\begin{eqnarray*}
D_{\phi}(p, q) + D_{\phi}(q, r) &\geq& \mu \cdot (D_{U}(p, q) + D_{U}(q, r)) \\
&\geq& (\mu/2) \cdot D_{U}(p, r)\\
&\geq& (\mu/2) \cdot D_{\phi}(p, r)
\end{eqnarray*}
The first and third inequality is using equation~\ref{eqn:similar} and the second inequality is using the approximate triangle inequality for Mahalanobis distance.
\end{proof}
\end{document}